\documentclass[11pt]{article}

\usepackage{epsfig}
\usepackage[section]{algorithm}
\usepackage{algorithmic}

\usepackage{anysize}
\usepackage{amsmath}
\usepackage{amssymb}
\marginsize{1in}{1in}{1in}{1in}

\def\qed{\hbox{\rlap{$\sqcap$}$\sqcup$}}
\newenvironment{proof}{\par\noindent{\bf Proof:}}{\mbox{}\hfill$\qed$\\}
\newtheorem{theorem}{Theorem}[section]
\newtheorem{lemma}{Lemma}[section]

\newcounter{rem}
\setcounter{rem}{0}
\newcommand{\ignore}[1]{ }

\begin{document}

\title{The Budgeted Transportation Problem}
\author{S. Kapoor \thanks{Computer Science, IIT,Chicago,IL, e-mail: kapoor@iit.edu} and M. Sarwat\thanks{Amazon.com, Work done at Computer Science, IIT, Chicago, IL, e-mail sarwat@iit.edu}}
\date{}
\maketitle

\begin{abstract}
Consider a transportation problem with sets of sources and sinks. There
are profits and prices on the edges. The goal is to maximize the profit while meeting
the following constraints; the total flow going out of a source must not exceed its
capacity and the total price of the incoming flow on a sink must not exceed its budget.
This problem is closely related to the generalized flow problem.


We propose an auction based primal dual approximation algorithm to solve the problem.
The complexity is $O(\epsilon^{-1}(n^2+ n\log{m})m\log U)$ where $n$ is the number
of sources, $m$ is the number of sinks, $U$ is the ratio of the
maximum profit/price  to the minimum profit/price.

We also show how to generalize the scheme to solve a more general version of the
problem, where there are edge capacities and/or the profit function is concave and 
piecewise linear. The complexity of the algorithm depends on  the
number of linear segments, termed ${\cal L}$, of the profit function.

\end{abstract}

\section{Introduction}
\label{chap_BT}



The Transportation Problem is a fundamental problem in Computer Science. It was 
initially formulated to represent the problem of transporting goods from warehouses to customers.
The mincost version of the problem is also known as the Hitchcock problem after one of its
early formulators \cite{hitchcock41dist}. It is well-known
that while the Transportation problem
can be seen as a special case of the mincost flow problem, the latter can also be transformed 
into the former \cite{papadimitriou}.

The first primal-dual algorithm, called the Hungarian method \cite{Kuhn55Hungarian}, 
was also proposed for 
the Assignment Problem which happens to be a 
special case of the Transportation Problem. More
recently, the Transportation Problem was used by 
Bertsekas \cite{bertsekastutorial} and 
Goldberg and Tarjan  \cite{goldberg87solving} as a natural framework for applying 
auction based algorithms. 

The typical transportation model assumes that there is no loss of goods as the goods are 
transported from the sources to the sinks. In real life, there is often a loss involved. 
For example, in case of  electricity there is loss due to resistance and in case of oil, 
due to evaporation. In case of a delicate merchandise, there might breakage. 
Sometimes, there may even be a gain (e.g. transfer of money: currency conversion).
The loss or gain can be modeled as a price function on the transport from the 
source to the sink.

As an example, consider  a set of depots and a set of retail stations. 
The depots have a certain supply of goods, 
while the retail stations have a limit on their intake capacity.
For every unit transported between a pair of depot and station, 
a certain amount of profit is made
which depends on selling price, cost of transportation etc. 
In addition, consider the situation where goods transported 
suffer a loss 
proportional to the number of units transported.
This factor is unique to each pair as it depends on
the mode of transfer and distance etc. 
In order to measure the net incoming amount at
the stations, we have to account for this loss. 
The intake capacity limit is therefore, on a weighted
sum rather than a simple sum.    

In a more recent context, keyword based advertising is widely applied by 
internet based service companies like Google and
Yahoo. Consider the following model for assigning advertisement space to the 
bidders: each bidder, say $j$, modeled by a sink,
specifies its total budget $b_j$ and the price $p_{ij}$ it is willing to pay 
for the $i$th keyword, specified by  source $i$.
An estimate of the number of times a keyword would be invoked in web 
searches provides a capacity $a_i$ for the $i$th keyword.
Assignment of a keyword $i$ to a bidder $j$ provides a profit $c_{ij}$.
The goal is to maximize the revenue of the service provider without exceeding the bidders budget.
A version of the problem restricted to an integral setting may be 
found in \cite{andelman04auctions}.
The online version of the problem has been considered in \cite{mehta05adwords}.

In order to model the above and similar problems, we propose a generalization of the 
transportation problem: the Budgeted  Transportation Problem (BTP) which is closely related 
to Generalized Flow Problem. We then present an auction based approximation algorithm. 
We also show how to extend the technique to further generalizations of the problem. 

Like the conventional Transportation Problem, the problem is 
modeled  on a weighted bipartite graph. 
In our version of the problem we consider maximizing the profit in 
transporting/assigning goods
from the sources to the sinks, subject to the following conditions:
Each source, $i$  has a {\it supply} of goods, bounded by a capacity function $a_i$,
and each  sink $j$ has a bound on the incoming set of goods, 
specified by a  {\it budget} $b_j$.
The difference from the conventional transportation problem is that the  budget is a 
bound on the weighted sum of the incomming goods as follows: each  source-sink 
pair $(i,j)$ has a {\it price} function $p_{ij}$ and the bound $b_j$ is on the 
sum $ \sum_i p_{ij}f_{ij}$ where $f_{ij}$ is the amount of goods transported/assigned 
from source $i$ to sink  $j$. 
The goal is to maximize the sum of profits of transporting/assigning goods 
from sources to sinks,
given that the profit obtained by 
transporting/assigning $f_{ij}$ units is $c_{ij}f_{ij}$.
The problem can be formally stated as follows: 

Given a bi-partite graph $G(S,T)$, capacity $a: S \rightarrow \mathbb{Z}^+$, 
budget $b: T \rightarrow \mathbb{Z}^+$,
price $p: S \times T \rightarrow \mathbb{Z}^+$, profit $c: S \times T \rightarrow \mathbb{Z}^+$ the budgeted transportation
problem is to find a flow function $f: S \times T \rightarrow \mathbb{R}^+$
so as to maximize $\sum_{i \in S, j \in T} c_{ij}f_{ij}$ subject to capacity constraints, $\sum_{j \in T} f_{ij} \leq a_i$
and budget constraints $\sum_{i \in S} p_{ij}f_{ij} \leq b_j , \forall j \in T$.We let $|S| = n$ and $|T| = m$.

Note that, for simplicity, the price function, like the other functions,
is assumed to be integral in the 
paper. This could be replaced by a  price function to the set of rationals.


We describe an approximation scheme for the budgeted transportation problem.
Our algorithm is a primal-dual auction algorithm.
Auction algorithms have been utilized before in the solution of combinatorial
problems \cite{bertsekastutorial,goldberg87solving} including the classical transportation 
problem \cite{bertsekas89transport}, and more recently for finding market 
equilibrium \cite{garg04auction, garg04auctionbased}. 
Bertsekas et al \cite{bertsekas97anerelaxation} apply the auction mechanism  to 
solve the generalized flow problem but the complexity of that 
algorithm is not polynomial.

While the auction technique has been used before, the application in the current context
is different.  The auction mechanism is used to realize a set of tight edges, i.e. edges 
which satisfy complementary slackness conditions. Paths and cycle are found in a subgraph 
in these set of tight edges. The flow path/cycle could either be flow generating or 
create a flow reduction. The primal-dual nature of our approach allows us to push flow 
along these paths or cycle without worrying about the profit of the path/cycle. The cycle 
can be eliminated in linear time. Our algorithm may be interpreted to have achieved a  
version of cycle-canceling \cite{wayne99apolynomial} without having to compute 
the associated profit.

We achieve a complexity of $O(\epsilon^{-1}(n^2+ n\log{m})m\log U\log{m})$, 
where $n$ is the number of sources, $m$ is the number of sinks and $U$ is
$\frac{ \max_{ij}(\frac{p_{ij}}{c_{ij}}) }{ \epsilon \min_{ij} (\frac{p_{ij}}{c_{ij}})}$.
The problem can also be modeled as a generalized flow problem.
This can be done by adding a super source connected to all the sources and a supersink
connected to all the sinks and minimizing the negative of the profit function. 
By using the technique of scaling as in \cite{Luby-Nisan},
the dependancy on $\log U$ can be replaced by  $\log (nm/\epsilon)$.
We will thus choose to ignore terms involving $\log U$, and in fact
$\log n$, when comparing our results with other results.

There are numerous results 
for maximum/mincost generalized flow and related problems 
\cite{tardos98simple,fleischer99fast,wayne99apolynomial,radzik98faster,oldham99combinatorial,
goldberg91combinatorial,goldfarb96afaster}. 
Some of the currently best known FPTAS for the generalized flow problems
are of complexity $\tilde{O}(\log\epsilon^{-1}E(E+V\log{I})))$ 
\cite{fleischer99fast,tardos98simple,radzik98faster} 
for 
maximum  generalized flow and $\tilde{O}(\epsilon^{-2}E^2VJ)$ \cite{fleischer99fast} and 
$\tilde{O}(\log{\epsilon^{-1}}E^2V^2 )$ \cite{wayne99apolynomial} for minimum 
cost generalized flow. Here, $I=\log M$, $J=\log\log M + \log \epsilon^{-1}$ and $M$ is 
the largest integer in cost representation. For the special case, when 
there are no flow generating cycles, the complexities are 
$\tilde{O}(\epsilon^{-2}E^2)$ and $\tilde{O}(\epsilon^{-2}E^2J)$
for maximum and minimum cost generalized flows respectively \cite{fleischer99fast}, where
$V$ is the number of nodes and $E$ is the number of edges in the given graph.

A straight-forward application of the best known generalized flow approximation
algorithm to this problem would require $\tilde{O}(\log{\epsilon^{-1}}(mn)^2(m+n)^2)$ time
using the minimum cost generalized flow algorithm in \cite{wayne99apolynomial}  
or $\tilde{O}(\epsilon^{-2}(mn)^2(m+n)J) $ time using the algorithm in \cite{fleischer99fast}. 
Our algorithm has a better complexity than the first when
$\epsilon > 1/m^3$ and is better than the second for all ranges of $\epsilon$.

If we use the packing algorithm from \cite{garg98faster} than 
one can solve this problem in
$\tilde{O}(\epsilon^{-2}(n+m)nm)$ since we have $m+n$ rows and $mn$ columns
in the primal constraint matrix. Other known combinatorial
algorithms for packing \cite{PST95,GK94}
are also all dependent on $1/\epsilon^2$.  
After the completion of this work we became aware of 
another result \cite{kouf-young}
which provides a randomized $(1+\epsilon)$ approximate solution to the 
packing problem in time $O(N +(r+c)\log (N)/\epsilon^2)$ where
$N$ is the number of non-zero entries in the specification of
the packing program and $r+c$ is the number of rows and columns in the
matrix specifying the packing program. As applied to the budgeted
transportation problem this algorithm would 
require $O(nm ( \log nm)/\epsilon^2)$ and  would be faster
than our approach when $\epsilon \geq 1/n$.  

Comparing with algorithms that have a dependence on $1/\epsilon$,
one recent algorithm for packing problems
is by Bienstock and Iyengar \cite{bienstock:2006}, 
which reduces the packing problem to solve a sequence of 
quadratic programming which in turn are approximated by linear programs.
Consequently, the complexity will have high polynomial dependance on $n$ and
$m$.
We achieve a combinatorial algorithm with 
complexity dependant on $O(1/\epsilon)$ and a low  polynomial
in $n$ and $m$.

\ignore{
Note that the approximation to the budgeted transportation problem does 
not immediately provide an algorithm for the generalized flow problem in general graphs 
since the approximation guarantee does not translate back.
}
We extend our algorithm to solve a generalization of BTP called
the  Budgeted Transshipment 
Problem (BTS). It is similar to BTP except for added edge capacity constraints. 
We achieve the same approximation with no additional time complexity,
i.e $O(\epsilon^{-1}(n^2+ n\log{m})m\log U\log{m})$.
This is of improved efficiency, 
since, again, to compare this with existing results: the result in \cite{wayne99apolynomial} 
would imply a complexity of
$\tilde{O}(\log{\epsilon^{-1}}(mn)^2(m+n)^2)$, \cite{fleischer99fast} 
would imply a complexity of
$\tilde{O}(\epsilon^{-2}(mn)^2(m+n)J)$ while  \cite{garg98faster} would imply
a complexity of $\tilde{O}(\epsilon^{-2}(n+m+mn)nm)$
as the number of constraints is now $m+n+mn$.
The results from \cite{kouf-young} would, again, be of complexity
$O(nm (\log nm)/ \epsilon^2)$.

In section \ref{sec_basic}
we present a basic auction based algorithm and show its correctness. In section \ref{sec_modified}
we describe a modified algoritm and discuss its convergence. Since the algorithm is similar to the algorithm for the generalized
problem, BTS, we demonstrate an outline of the algorithm.
Section \ref{sec_extend} extends the result to BTS. In section
\ref{sec_details} we present a detailed algorithm for BTS (which also applies to BTP)
and in subsequent sections the proofs of correctness and complexity of the 
algorithm. The piece-wise linear case is discussed in section~\ref{piece-wise}. 
We conclude in  section \ref{sec_conc5}.
We also  show a relationship between  BTP and  minimum cost generalized flow problems in the appendix (section \ref{sec_relation}).

\section{Auction Based Algorithm}
\label{sec_auc}
\subsection{The Basic Method}
\label{sec_basic}

In this section, we describe a primal-dual auction framework to 
approximate the maximum profit 
budgeted transportation problem (BTP) to within a factor of $(1-\epsilon)$. 
The algorithm is a locally greedy algorithm 
where the sources compete against each other for sinks by bidding with lower 
effective 
profit, a quantity defined using dual variables. 
This algorithm will serve to illustrate the methodology. In order to prove
complexity bounds we will, in the next section, refine the algorithm further.

We first model the problem using an LP and also define its dual.

\begin{eqnarray*}
{\rm maximize\ \ \ \ \ }\sum_{i\in S, j \in T} c_{ij}f_{ij}\\
{\rm subject\ to:}\\ 
\sum_{j \in T} f_{ij} \leq a_i \ \ \ \forall i \in S\\
\sum_{i \in S} p_{ij}f_{ij} \leq b_j \ \ \ \forall j \in T\\
f_{ij} \geq 0 \ \ \ \ \  \forall i \in S ,j \in T
\end{eqnarray*}
In the remainder of the paper, the variables $i$ and $j$ will refer to
a  source $i$ in $S$ and sink $j$ in $T$, respectively.

The dual of the above program is 

\begin{eqnarray*}
{\rm minimize\ \ \ \ \ }\sum_{i \in S } \alpha_i a_i +  \sum_{j \in T} \beta_j b_j\\
{\rm subject\ to:}\\ 
\alpha_i &\geq& c_{ij} - p_{ij}\beta_j \ \ \ \forall ij\\
\alpha_i, \beta_j &\geq& 0
\end{eqnarray*}

\ignore{
The algorithm maintains a pair of feasible dual and primal
solutions. It works to modify them until all the 
complementary slackness conditions are exactly or 
approximately satisfied.
$c_{ij} - p_{ij}\beta_j$
}

The following variables based on primal and dual variables are
used in the algorithm.

\begin{itemize}
\item
$\alpha_i$ : the dual variable associated with each source.
\item
$\beta_j$ : the dual variable corresponding to the budget constraint on the sink. 
An algorithmic interpretation of $\beta_j$ is that it represents
the value of sink $j$. It increases as different sources compete in order to ship flow to it.
At every instance of the algorithm, flow may be pushed onto the sink $j$ 
from a source at a given
value of $\beta_j$. We term this an assignment of the source at valuation $\beta_j$.
Further, at any instance of the algorithm sources are assigned to sinks at two prices (values),
some at value $\beta$ while others are assigned at 
a {\bf companion valuation} $\beta'_j$ where 
$\beta_j= \beta'_j(1+\epsilon)$. 

\item
$y_{ij}$:  the {\bf valuation} of sink $j$ when assigned to source 
$i$ to ship the flow from $i$ to $j$. It equals  $\beta_j$ when source $i$ ships
flow to sink $j$.  Subsequently, as $\beta_j$ rises, it becomes equal to $\beta'_j$.
\item
$s_{i}$ : the supply (or surplus) remaining at the source $i$. $s_i$ is  
evaluated as  $s_i = a_i - \sum_j f_{ij}$
 and is implicitly updated in the algorithm.
\item
$d_j$ : the amount remaining from the budget at sink $j$.
$d_j$ is evaluated as $d_j = b_j - \sum_i p_{ij}f_{ij}$.
\end{itemize}

Because the input is asssumed to be integral, we note that all the variables
are rationals, with a bound on the absolute value of the denominators.

The initial and primal dual feasible solutions are obtained by the initialization 
procedure \ref{proc_init} {\bf Initialize}.
The algorithm maintains a pair of feasible dual and primal
solutions. It works to modify them until the following
complementary slackness conditions,
which define optimality, are exactly or approximately satisfied.

\begin{eqnarray}
\label{primal_CScond}
{\rm {\bf CS1:}} &  \ \ f_{ij}(c_{ij} - p_{ij}\beta_j - \alpha_i) = 0 \ \ \ \forall ij\\
\label{source_cond}
{\rm {\bf CS2:}} & \ \ \alpha_i(a_i - \sum_{j}f_{ij}) = 0 \ \ \ \forall i\\
\label{sink_cond}
{\rm {\bf CS3:}} & \ \ \beta_j(b_j - \sum_{i} p_{ij}f_{ij}) = 0 \ \ \ \forall j
\end{eqnarray} 

In each iteration, the algorithm \ref{alg_auc} {\bf Auction} picks a source,
$i$, with  
positive surplus, $s_i= a_i - \sum_j f_{ij}$, 
and $\alpha_i > 0$. It then picks a sink $j$ that provides it the 
largest effective profit, where the profit is specified by the quantity, 
$c_{ij} - p_{ij}\beta_j$. If the sink is not saturated
flow is pushed from source $i$ to sink  $j$.
If sink $j$ is saturated, flow from a 
source  $i'$, that is shipping flow to sink $j$ at a lower of value $\beta'_j$, is replaced
by flow from $i$, i.e. source $i$ wins over the shipment to sink $j$ from source $i'$ by
bidding a higher value ($\beta_j$) for $j$. 
After the change of flow assignment, it is checked  (\ref{proc_simpleupdate} {\bf UpdateBeta( $\beta_j$ ) }) 
that there exists at least one source shipping flow at  the lower value. 
If not, then $\beta_j$ is raised, indicating that
shipping flow to $j$ requires paying a higher value (and lower effective profit).
At this stage, 
the process is repeated considering sinks in order of effective profit until \\
(i) either all the surplus
at node $i$ is pushed out or \\
(ii) $\alpha_i =0$ 

\begin{algorithm}
\begin{algorithmic}[1]
\vspace*{.1in}
\STATE{$\alpha_i \leftarrow \max_j c_{ij}$}
\STATE{$f_{ij} = 0$}
\STATE{$\beta'_j \leftarrow 0$}
\STATE{$\beta_j \leftarrow 0$}
\end{algorithmic}
\caption{{\bf Initialize}}
\label{proc_init}
\end{algorithm}

\begin{algorithm}
\begin{algorithmic}[1]
\vspace*{.1in}
\WHILE{there exists a source $i$ such that $\alpha_i > 0$
       and $\sum_j f_{ij} < a_i $}

\STATE{Pick a sink $j$ such that $c_{ij}-p_{ij}\beta_j$ is maximized}

\IF{ {\em the sink is saturated, i.e.} $\sum_{i} p_{ij}f_{ij} = b_j$}
\label{alg_auc:line31}

\STATE{Find a source $i'$, currently assigned at the lower value level i.e. $y_{i'j} = \beta'_j$ }
	\IF{$i'\neq i$}
		\STATE{{\em //determine the amount that can be replaced}}
		\STATE{$x \leftarrow \min(s_{i}, f_{i'j}p_{i'j}/p_{ij})$}
		\STATE{{\em // replace the flow}}
		\STATE{$f_{ij} \leftarrow f_{ij}+x$}
		\label{alg_auc:line1}
		\STATE{$ f_{i'j} \leftarrow f_{i'j} - xp_{ij}/p_{i'j}$}
		\label{alg_auc:line2}
		\STATE{$y_{ij} \leftarrow \beta_j$}
	\ELSE
		\STATE{{\em //$i$ is already shipping to $j$ at the lowest level}} 
		\STATE{{\em //Raise the  assigned value without changing any flow} }
		\STATE {$y_{ij} \leftarrow \beta_j$ \label{step_betachange}}
	\ENDIF
	\STATE{Update $\beta_j$ and $\beta'_j$}
	\label{alg_auc:line18}

\ELSE
	\STATE{{\em //The sink is unsaturated, push maximum flow possible }}
	\STATE{ {\em //under supply and demand constraint}}
	\STATE{$x \leftarrow \min (s_i,d_j/p_{ij})$}
	\label{alg_auc:line3}
	\STATE{$f_{ij} \leftarrow f_{ij} + x$}
	\label{alg_auc:line4}

\ENDIF

\STATE {$\alpha_i \leftarrow \max_j (c_{ij}-p_{ij}\beta_j,0) $}
\label{alg_auc:line101}

\IF{$\alpha_i = 0$}
\STATE{$y_{ij} \leftarrow \beta'_j$ for all $j$}
\label{alg_auc:line11}
		\vspace*{-2pt}
\STATE{//This ensures there is no further rise in 
$\beta_j$ without reducing $f_{ij}$ to zero}
\ENDIF

\ENDWHILE

\end{algorithmic}
\caption{ {\bf Auction} }
\label{alg_auc}
\end{algorithm}

\begin{algorithm}
\begin{algorithmic}[1]

\vspace*{.1in}

\IF{$\beta_j$ = 0} 
	\STATE{$\beta_j \leftarrow \epsilon \min_i c_{ij}/p_{ij}$}
\ELSIF {$\forall i:f_{ij} > 0, y_{ij} = \beta_j$}
	\STATE{$\beta'_j \leftarrow \beta_j$}
	\STATE{$\beta_j \leftarrow \beta_j(1+\epsilon)$}
\ELSE
	\STATE{//no update required}
\ENDIF

\end{algorithmic}
\caption { {\bf UpdateBeta} ($\beta_j$ )}
\label{proc_simpleupdate}
\end{algorithm}
To prove the correctness, we show that primal and dual equations are satisfied
and complentary slackness conditions hold. All line numbers in the
discussion below correspond to the description of {\bf Algorithm 2.2  Auction},
except when stated.
\begin{lemma}
The dual solution maintained by the algorithm is always feasible.
\end{lemma}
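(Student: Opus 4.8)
The plan is to prove this by induction on the iterations of the while loop of Algorithm \ref{alg_auc}, maintaining as an invariant the three dual conditions: $\alpha_i \geq 0$ for every source, $\beta_j \geq 0$ for every sink, and the edge constraint $\alpha_i + p_{ij}\beta_j \geq c_{ij}$ for every pair $(i,j)$. Before the induction I would record two structural facts that drive the whole argument. First, $\beta_j$ is \emph{non-decreasing} throughout: the only place it changes is procedure \ref{proc_simpleupdate} {\bf UpdateBeta}, which either raises it from $0$ to $\epsilon\min_i c_{ij}/p_{ij}>0$ or multiplies it by $(1+\epsilon)>1$, and never lowers it. Second, $\alpha_i$ is always assigned as a maximum taken with $0$ on line \ref{alg_auc:line101}, so $\alpha_i\ge 0$ is immediate, and $\beta_j\ge 0$ holds because every update keeps it non-negative.

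For the base case, after procedure \ref{proc_init} {\bf Initialize} we have $\beta_j=0$ and $\alpha_i=\max_j c_{ij}$. Since the inputs are non-negative, $\alpha_i\ge 0$ and $\beta_j\ge 0$ trivially hold, and for each pair $\alpha_i + p_{ij}\beta_j = \max_k c_{ik} \ge c_{ij}$, so the edge constraints are satisfied and the initial dual solution is feasible.

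For the inductive step, assume feasibility holds at the start of an iteration that selects source $i$ and sink $j$. Within the iteration the only dual variables that can change are $\beta_j$ (through line \ref{alg_auc:line18}, the {\bf UpdateBeta} call) and $\alpha_i$ (recomputed on line \ref{alg_auc:line101}); the flow updates do not touch the dual. I would verify the edge constraints by partitioning the pairs. For a pair $(i'',j'')$ with $i''\ne i$ and $j''\ne j$, neither variable changed and the constraint carries over by the inductive hypothesis. For a pair $(i'',j)$ with $i''\ne i$, the value $\alpha_{i''}$ is untouched while $\beta_j$ can only have increased; since $p_{i''j}\ge 0$, the quantity $\alpha_{i''}+p_{i''j}\beta_j$ can only have grown, so the constraint still holds. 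For the remaining pairs $(i,k)$, the recomputation $\alpha_i\leftarrow\max_k(c_{ik}-p_{ik}\beta_k,0)$ uses the current (already updated) values of all $\beta_k$, and by the definition of the maximum it yields $\alpha_i\ge c_{ik}-p_{ik}\beta_k$ for every $k$, which is exactly the edge constraint for each pair $(i,k)$. Non-negativity is preserved as noted above, completing the step.

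The point that requires care, and the only place the argument could break, is that the algorithm recomputes $\alpha_i$ for the \emph{selected} source alone, yet it may raise $\beta_j$, which appears in the constraints of \emph{all} sources incident to $j$. The resolution is precisely the monotonicity of $\beta_j$: raising $\beta_j$ relaxes rather than tightens $\alpha_{i''}+p_{i''j}\beta_j\ge c_{i''j}$, so the un-recomputed $\alpha_{i''}$ values remain valid. Were $\beta_j$ ever permitted to decrease, one would be forced to re-examine the constraints of every other source, so establishing that {\bf UpdateBeta} is monotone is the crux of the proof.
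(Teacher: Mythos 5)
Your proof is correct and rests on exactly the same two observations as the paper's: $\beta_j$ is monotonically non-decreasing (so raising it can only relax the constraints of sources whose $\alpha$ is not recomputed), and whenever $\alpha_i$ is modified it is reset to $\max_j(c_{ij}-p_{ij}\beta_j,0)$, which restores all of its edge constraints. You have merely packaged the argument as an explicit induction with a base case, which the paper leaves implicit.
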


\begin{proof}
We observe that the value of $\beta_j$ never decreases as the algorithm 
progresses. Therefore, $\alpha_i < c_{ij} - p_{ij}\beta_j$ can only 
become true when $\alpha_i$ is modified. However, when $\alpha_i$ is modified, 
it is set to a value $ \geq \max_j \{ c_{ij} - p_{ij}\beta_j \}$ in line \ref{alg_auc:line101}. 
At that point, $\alpha_i \geq c_{ij} - p_{ij}\beta_j$ for all $j$. 
\end{proof} 
For use in  the next lemma, we define 
the  {\em total price of the incoming flow on sink $j$} to be 
$\sum_{i} p_{ij}f_{ij}$.

\begin{lemma}
The primal solution maintained by the algorithm 
does not exceed the budget constraints. 
\label{primal_feasible1}
\end{lemma}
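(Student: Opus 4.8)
The plan is to argue by induction on the iterations of \textbf{Algorithm 2.2 Auction}, maintaining the invariant that $\sum_i p_{ij}f_{ij} \leq b_j$ for every sink $j$ throughout the execution. The base case follows directly from \textbf{Initialize}, which sets $f_{ij}=0$ for all $i,j$, so that the total incoming price at each sink is $0 \leq b_j$. For the inductive step, I would note that the flow variables are altered in only two places in the algorithm, so it suffices to check that each kind of update preserves the invariant.

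First consider the unsaturated branch (lines \ref{alg_auc:line3}--\ref{alg_auc:line4}), where $x \leftarrow \min(s_i, d_j/p_{ij})$ and $f_{ij}$ is increased by $x$. Since $d_j = b_j - \sum_i p_{ij}f_{ij}$ is nonnegative by the inductive hypothesis, and $x \leq d_j/p_{ij}$, the increase in the total incoming price is $p_{ij}x \leq d_j$. Hence after the push $\sum_i p_{ij}f_{ij} \leq b_j$, and in fact the push is calibrated so that the sink becomes saturated exactly when $x = d_j/p_{ij}$.

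The crucial case is the saturated replacement branch (lines \ref{alg_auc:line1}--\ref{alg_auc:line2}), where $f_{ij}$ is increased by $x$ while $f_{i'j}$ is decreased by $x\,p_{ij}/p_{i'j}$. The key observation is that the net change in the weighted sum is
\[
p_{ij}\cdot x \;-\; p_{i'j}\cdot \frac{x\,p_{ij}}{p_{i'j}} \;=\; p_{ij}x - p_{ij}x \;=\; 0,
\]
so the swap leaves $\sum_i p_{ij}f_{ij}$ exactly unchanged; as the sink was already saturated at $b_j$ before the swap, it stays at $b_j$. The bound $x \leq f_{i'j}\,p_{i'j}/p_{ij}$ simultaneously guarantees $f_{i'j}$ remains nonnegative, so the primal solution stays valid. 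The remaining subcase $i'=i$ (line \ref{step_betachange}) updates only the bookkeeping value $y_{ij}$ and touches no flow, so the invariant is trivially preserved there.

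I expect the main obstacle to be organizational rather than mathematical: one must confirm that lines \ref{alg_auc:line4}, \ref{alg_auc:line1}, and \ref{alg_auc:line2} are genuinely the \emph{only} places where $f$ is modified, and that the factor $p_{ij}/p_{i'j}$ in the replacement update is precisely what forces the two contributions to cancel. Once this cancellation and the unsaturated bound are verified, the invariant holds after every iteration, and since it holds at initialization, the lemma follows.
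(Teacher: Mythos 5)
Your proof is correct and follows essentially the same route as the paper's: identify the two places where flow changes, observe that the replacement step's increase $p_{ij}x$ is exactly cancelled by the decrease $p_{i'j}\cdot x\,p_{ij}/p_{i'j}$, and bound the unsaturated push by $d_j$. The only difference is presentational — you frame it as an explicit induction invariant and also verify nonnegativity of $f_{i'j}$, which the paper defers to a separate observation — but the mathematical content is identical.
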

\begin{proof}
The flow on any edge is increased in lines \ref{alg_auc:line1} and 
\ref{alg_auc:line4}. In line \ref{alg_auc:line1} 
the total price of the incoming flow on sink $j$ increases by $xp_{ij}$. 
It is then reduced by $p_{i'j}x \frac{p_{ij}}{p_{i'j}}$ in line 
\ref{alg_auc:line2}, and therefore, the budget is met. 

In line \ref{alg_auc:line3}, the total price increases by at most 
$d_j = b_{j}-\sum_i p_{ij}f_{ij}$ and thus the budget is not exceeded.
\end{proof}

\begin{lemma}
The primal solution maintained by the algorithm
 does not violate supply constraints.
\label{primal_feasible2}
\end{lemma}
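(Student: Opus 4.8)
The plan is to mirror the argument used in Lemma \ref{primal_feasible1} for the budget constraint, but now tracking the total outgoing flow $\sum_j f_{ij}$ at each source $i$ against its capacity $a_i$. I would proceed by induction on the iterations of the {\bf while} loop, maintaining the invariant that $\sum_j f_{ij} \le a_i$ for every source $i$. The base case is immediate: the procedure {\bf Initialize} sets $f_{ij} = 0$, so $\sum_j f_{ij} = 0 \le a_i$ trivially.

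For the inductive step I would examine every line that alters a flow value and argue that the invariant is preserved. Flow on an edge is increased only in lines \ref{alg_auc:line1} and \ref{alg_auc:line4}, and in both cases the edge is $(i,j)$ where $i$ is the source currently selected by the loop. In each of these lines the increment is the quantity $x$, which is defined as a minimum that includes the current surplus $s_i$: namely $x = \min(s_i, f_{i'j}p_{i'j}/p_{ij})$ computed just above line \ref{alg_auc:line1}, and $x = \min(s_i, d_j/p_{ij})$ computed in line \ref{alg_auc:line3} and applied in line \ref{alg_auc:line4}. Hence $x \le s_i = a_i - \sum_j f_{ij}$, where the right-hand side is evaluated using the flow values holding immediately before the update. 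Consequently, after raising $f_{ij}$ by $x$, the new total satisfies $\sum_j f_{ij} + x \le \sum_j f_{ij} + (a_i - \sum_j f_{ij}) = a_i$, so the constraint at source $i$ continues to hold. Since no other source has its outgoing flow increased in the iteration, and at most one increment occurs per iteration, their invariants are untouched.

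The only remaining modification is the decrease of $f_{i'j}$ in line \ref{alg_auc:line2}. This can only reduce $\sum_j f_{i'j}$, so the supply constraint at source $i'$ is preserved automatically, and every other source is unaffected. Combining the two cases, every iteration leaves the invariant intact, which establishes the lemma.

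I do not expect a serious obstacle here; the argument is the source-side analogue of Lemma \ref{primal_feasible1}. The one point requiring care is to verify that $x \le s_i$ genuinely holds in both increase branches, which follows directly from the two definitions of $x$ above, and to confirm that $s_i$ is maintained as $a_i - \sum_j f_{ij}$ throughout, as stated in the definition of the algorithm's variables.
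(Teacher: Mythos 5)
Your proposal is correct and takes essentially the same approach as the paper: the paper's proof simply observes that the increase $x$ is always capped by the available supply $s_i$ in both branches (lines 7 and 21), which is exactly the core of your inductive argument. Your version just spells out the induction and the harmless flow decrease at $i'$ more explicitly.
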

\begin{proof}
During the algorithm, the increase $x$ in flow, $f_{ij}$, on an edge $ij$ 
is always limited by the available supply (line 7 and 21) and
therefore, the supply constraint is clearly satisfied. 
\end{proof}

Combining the above lemmas \ref{primal_feasible1} and \ref{primal_feasible2} 
we conclude:

\begin{lemma}
The primal solution maintained by the algorithm is always feasible.
\end{lemma}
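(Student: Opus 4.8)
The plan is to recognize that this is a pure combination step: a primal solution $f$ is feasible for the stated LP precisely when it satisfies the three families of constraints $\sum_j f_{ij} \le a_i$ (supply/capacity), $\sum_i p_{ij} f_{ij} \le b_j$ (budget), and $f_{ij} \ge 0$ (nonnegativity). So I would first observe that the two preceding lemmas already dispose of two of these three families, and then check that nothing else remains.

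Concretely, I would invoke Lemma \ref{primal_feasible2} to cover the supply constraints and Lemma \ref{primal_feasible1} to cover the budget constraints. A point worth stating explicitly is that both of these were proved as invariants of the update steps rather than as properties of a terminal solution, so they hold for the primal solution at \emph{any} instant at which the algorithm is inspected, which is exactly what the word ``always'' in the statement requires. That is the entire content of the assertion ``combining the above lemmas,'' and there is no real difficulty in it.

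The only constraint family not addressed by the two cited lemmas is nonnegativity, so to be complete I would add a one-line verification of it. Since every $f_{ij}$ is initialized to $0$ in \textbf{Initialize} and is only ever increased in lines \ref{alg_auc:line1} and \ref{alg_auc:line4}, the sole place a flow value could drop below zero is the decrement $f_{i'j} \leftarrow f_{i'j} - x\,p_{ij}/p_{i'j}$ in line \ref{alg_auc:line2}. But the preceding choice $x \leftarrow \min(s_i,\, f_{i'j} p_{i'j}/p_{ij})$ forces $x \le f_{i'j} p_{i'j}/p_{ij}$, equivalently $x\,p_{ij}/p_{i'j} \le f_{i'j}$, so the updated value is still nonnegative. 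With supply, budget, and nonnegativity all confirmed as invariants, primal feasibility follows. I do not expect any genuine obstacle here; the main thing to be careful about is simply the bookkeeping that these three families exhaust the primal constraints and that each is maintained at every iteration, rather than letting the nonnegativity check slip through as implicit.
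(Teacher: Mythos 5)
Your proposal is correct and matches the paper, which proves this lemma simply by combining Lemmas \ref{primal_feasible1} and \ref{primal_feasible2}. Your explicit verification of nonnegativity (via $x \le f_{i'j}p_{i'j}/p_{ij}$ in line \ref{alg_auc:line2}) is a small but welcome addition that the paper leaves implicit.
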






\begin{lemma}
\label{beta-change}
During the course of the algorithm, if sink $j$ is not saturated 
then $\beta_j = 0$. Further, $\forall j, \beta_j$ is non-decreasing during the
course of the algorithm.
\end{lemma}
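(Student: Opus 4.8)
The plan is to prove the two assertions separately, dispatching the monotonicity claim first since it is immediate, and then reducing the saturation claim to an invariant maintained throughout the run.

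For the monotonicity of $\beta_j$ I would locate every place where $\beta_j$ is assigned. After {\bf Initialize} sets $\beta_j \leftarrow 0$, the only subsequent modification happens inside procedure {\bf UpdateBeta}, invoked at line~\ref{alg_auc:line18}. In its first branch $\beta_j$ is raised from $0$ to $\epsilon\min_i c_{ij}/p_{ij}$, which is strictly positive since the profits and prices are positive integers and $\epsilon>0$; in its second branch $\beta_j$ is replaced by $\beta_j(1+\epsilon)$, again a strict increase because $\beta_j>0$ there. As no other step touches $\beta_j$, it is non-decreasing throughout. This is precisely the observation already invoked at the start of the proof of the first lemma.

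For the first assertion I would prove the contrapositive as a loop invariant: at every step, for every sink $j$, $\beta_j>0$ implies that $j$ is saturated, i.e. $\sum_i p_{ij}f_{ij}=b_j$. The base case is trivial since {\bf Initialize} sets all $\beta_j=0$. For the inductive step I would examine the only two kinds of events that can threaten the invariant: events that raise $\beta_j$ from $0$, and events that could lower the incoming price on an already-saturated sink. Raising $\beta_j$ occurs only in {\bf UpdateBeta}, which is called exclusively inside the branch guarded by the saturation test at line~\ref{alg_auc:line31}; hence whenever $\beta_j$ first becomes positive the sink is already saturated, so the invariant is restored at that instant.

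The crux — and the step I expect to be the main obstacle — is arguing that a saturated sink can never revert to being unsaturated, so that once $\beta_j>0$ the invariant is preserved forever. The only flow-reducing operation in the algorithm is line~\ref{alg_auc:line2}, where $f_{i'j}$ drops by $xp_{ij}/p_{i'j}$; but this is paired with the increase of $f_{ij}$ by $x$ at line~\ref{alg_auc:line1}, and the net change to the incoming price is $xp_{ij}-p_{i'j}\,xp_{ij}/p_{i'j}=0$ — exactly the cancellation already noted in the proof of Lemma~\ref{primal_feasible1}. Thus a replacement preserves the incoming price and keeps a saturated sink saturated. The remaining flow-increasing step, line~\ref{alg_auc:line4}, lies in the \emph{else} branch, entered only when the sink is unsaturated, so it never acts on an already-saturated sink (and in that branch $\beta_j$ is left untouched, keeping $\beta_j=0$). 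Combining these cases the invariant holds throughout, and its contrapositive yields that an unsaturated sink has $\beta_j=0$.
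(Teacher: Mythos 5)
Your proof is correct and follows essentially the same route as the paper's: $\beta_j$ starts at zero, is modified only by \textbf{UpdateBeta}, which is invoked only under the saturation test of line~\ref{alg_auc:line31}, and every update is an increase. The one place you go beyond the paper is the persistence-of-saturation argument (the exact cancellation $xp_{ij}-p_{i'j}\,xp_{ij}/p_{i'j}=0$ in the replacement step, together with the observation that the unsaturated branch leaves $\beta_j=0$), which the paper's brief proof leaves implicit but which is indeed needed for the claim to hold at all times rather than only at the instant $\beta_j$ is raised.
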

\begin{proof}
The variable $\beta_j$ for all  $j$ is initialized to zero. It is
subsequently changed by the procedure {\bf UpdateBeta}, which in turn is called only
if the sink $j$ is saturated. That is, line \ref{alg_auc:line18} is executed only when the
``{\bf if}'' condition in line \ref{alg_auc:line31} : $\sum_i p_{ij}f_{ij} = b_j$ is 
satisfied. Moreover,  $ \forall j, \beta_j$ only increases in value.
\end{proof}

The following lemma will be useful for bounding the complexity of the various algorithms:
\begin{lemma}
\label{beta-increase}
Suppose $f_{ij}$, the flow on edge $ij$, 
that was pushed at a valuation of sink $j$,  
$y_{ij}$=  $\beta_j'= \beta_j/(1+\epsilon)$, is reduced to zero, 
i.e. pushed back to zero. 
Then at the next occurrence of the event that corresponds to $f_{ij}$ being 
pushed back  on edge $ij$,  
the valuation of sink $j$, $y_{ij} \geq \beta_j$.
\end{lemma}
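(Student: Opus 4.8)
The plan is to track a single quantity across time: let $\beta_j^0$ denote the value of $\beta_j$ at the instant $f_{ij}$ is reduced to zero. By the hypothesis, at that instant the valuation was the companion value, i.e.\ $y_{ij} = \beta'_j = \beta_j^0/(1+\epsilon)$. The goal is to show that when flow is next added back onto edge $ij$, the new valuation satisfies $y_{ij} \geq \beta_j^0$. The engine of the argument is the monotonicity of $\beta_j$ established in Lemma~\ref{beta-change}: $\beta_j$ never decreases, so its value at any later moment is at least $\beta_j^0$.

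First I would pin down \emph{how} $f_{ij}$ gets zeroed. Flow on an edge is only decreased in line~\ref{alg_auc:line2}, which is reached only when the sink is saturated (the ``{\bf if}'' test in line~\ref{alg_auc:line31}) and some incoming source $i$ outbids the source currently shipping at the companion value. Thus, from the perspective of edge $ij$, source $i$ was playing the role of the replaced source $i'$, shipping at $y_{ij} = \beta'_j$, and $j$ was saturated at that moment; in particular $\beta_j^0 > 0$ whenever the statement is nonvacuous, since $\beta'_j > 0$ forces $\beta_j^0 = (1+\epsilon)\beta'_j > 0$. (If $\beta_j^0 = 0$ the conclusion $y_{ij}\ge 0$ is immediate from nonnegativity of the valuations.)

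Next I would argue the sink stays saturated from the zeroing until the re-push. The only operation touching $\sum_i p_{ij} f_{ij}$ while $\beta_j > 0$ is the replacement pair of lines~\ref{alg_auc:line1}--\ref{alg_auc:line2}, and the computation in Lemma~\ref{primal_feasible1} shows these exactly cancel, leaving $\sum_i p_{ij} f_{ij} = b_j$ unchanged; no step ever drives the total price strictly below $b_j$ once the sink is saturated. Hence $j$ is still saturated at the later moment when source $i$ again selects $j$ as its maximal-effective-profit sink and enters the saturated branch. Since $f_{ij} = 0$ at that moment, source $i$ is not among the sources shipping to $j$, so the companion-value source $i'$ found in the branch satisfies $i' \neq i$, and the algorithm executes line~\ref{alg_auc:line1} together with the assignment $y_{ij} \leftarrow \beta_j$, where $\beta_j$ is the \emph{current} value. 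By Lemma~\ref{beta-change} this current value is at least $\beta_j^0$, so $y_{ij} \geq \beta_j^0$, which is exactly the claim.

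The main obstacle is the ``stays saturated'' step: one must be certain that no sequence of replacements or budget-bookkeeping operations can leave $j$ unsaturated between the zeroing and the re-push, for otherwise the re-push could occur through the unsaturated branch (lines~\ref{alg_auc:line3}--\ref{alg_auc:line4}), which does not set $y_{ij}$ to the raised value and would break the argument. This is precisely where the price-conservation identity of Lemma~\ref{primal_feasible1} is indispensable, and I would make sure to note that it is applied at \emph{every} intervening replacement, not merely once.
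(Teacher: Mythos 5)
Your proof is correct and follows essentially the same route as the paper's: the monotonicity of $\beta_j$ (Lemma~\ref{beta-change}) combined with the observation that the next forward push on edge $ij$ executes $y_{ij} \leftarrow \beta_j$ at the current, non-decreased value, so the valuation at the re-push is at least the $\beta_j$ that held when the flow was zeroed. Your extra step verifying that the sink remains saturated (so the re-push necessarily goes through the replacement branch that performs this assignment, rather than the unsaturated branch) is a detail the paper leaves implicit, and it tightens the argument rather than changing it.
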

\begin{proof}
This follows from the fact the $\beta_j$ is monotonically non-decreasing.
Flow is pushed back on an edge $ij$
only when there is flow on the edge, $f_{ij}$ 
that had been pushed at the companion valuation 
of $\beta_j/(1+\epsilon)$, where $\beta_j$ is the current valuation.
Let flow be pushed back from sink $j$ to source $i$ on edge $ij$ 
at some time $t$ in the algorithm such that the flow is reduced to zero.
Then when flow is pushed from $i$ to $j$, at a subsequent
time $t' >t$, the  valuation  $y_{ij} \geq \beta_j$ (line 11), 
since the valuation
of $j$ does not decrease. Consequently, if flow is pushed back again
from $j$ to $i$, the valuation has  increased.
\end{proof}

\begin{lemma}
During the execution of the algorithm,
the Complementary Slackness condition, {\bf  CS1 }
is approximately satisfied for each edge $ij$. That is, 
\[
\forall ij:f_{ij} > 0, 
\alpha_i \leq c_{ij} - p_{ij}\beta_j +  \epsilon c_{ij}
\] 
\label{lemma_primalCS}
\end{lemma}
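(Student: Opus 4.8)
The plan is to pin down $\alpha_i$ on every edge carrying flow by combining two invariants that hold throughout the execution: a \emph{valuation} invariant and a \emph{tightness} invariant. First I would record the valuation invariant, namely that for every edge $ij$ with $f_{ij}>0$ the stored valuation satisfies $y_{ij}\in\{\beta_j,\ \beta_j/(1+\epsilon)\}$, so in particular $\beta_j\le(1+\epsilon)\,y_{ij}$. This is exactly the content of the companion-valuation mechanism: flow is always pushed with $y_{ij}\leftarrow\beta_j$, and the procedure {\bf UpdateBeta} raises $\beta_j$ (replacing $\beta'_j$ by the old $\beta_j$ and multiplying $\beta_j$ by $(1+\epsilon)$) only when every flow-carrying edge is at the top level $y_{ij}=\beta_j$; hence no flow-carrying edge ever lags more than one level behind, and Lemma \ref{beta-change} guarantees $\beta_j$ is monotone non-decreasing.

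The heart of the argument is the tightness invariant: for every edge with $f_{ij}>0$,
\[
\alpha_i \ \le\ c_{ij} - p_{ij}\,y_{ij}.
\]
I would prove this by induction over the steps of {\bf Auction}. When source $i$ last pushed flow onto $ij$, the sink $j$ was chosen to maximize the effective profit $c_{ij}-p_{ij}\beta_j$, and $y_{ij}$ was set to the then-current $\beta_j$; immediately afterwards line~\ref{alg_auc:line101} recomputes $\alpha_i=\max_{j'}(c_{ij'}-p_{ij'}\beta_{j'},0)$. Since $j$ was the maximizer before any raise of $\beta_j$ in that same step, and since each $\beta_{j'}$ can only have increased since, every term of this max is at most $c_{ij}-p_{ij}y_{ij}$, which gives the inequality at the moment of the push. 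It then persists because $\alpha_i$ is non-increasing (it is only ever reset to a maximum of quantities $c_{ij'}-p_{ij'}\beta_{j'}$ that decrease as the $\beta$'s grow) while $y_{ij}$ is unchanged for the surviving flow; the only other place $y_{ij}$ moves is line~\ref{alg_auc:line11}, where it is lowered to $\beta'_j\le y_{ij}$ precisely when $\alpha_i=0$, so the inequality only becomes easier.

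With both invariants in hand the conclusion is a one-line calculation. From the tightness invariant and $\beta_j\le(1+\epsilon)y_{ij}$,
\[
\alpha_i + p_{ij}\beta_j \ \le\ \bigl(c_{ij}-p_{ij}y_{ij}\bigr) + p_{ij}(1+\epsilon)y_{ij} \ =\ c_{ij} + \epsilon\, p_{ij} y_{ij},
\]
and since $\alpha_i\ge 0$ the tightness invariant also yields $p_{ij}y_{ij}\le c_{ij}$, so the right-hand side is at most $c_{ij}(1+\epsilon)$; rearranging gives $\alpha_i\le c_{ij}-p_{ij}\beta_j+\epsilon c_{ij}$, as required.

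I expect the main obstacle to be the induction for the tightness invariant, specifically verifying the non-negativity $c_{ij}-p_{ij}y_{ij}\ge 0$ at a push, i.e. that the chosen argmax sink really offers non-negative effective profit. The delicate case is a saturated sink reached while $\alpha_i$ still carries a stale positive value; here I would lean on Lemma \ref{beta-change} (an \emph{unsaturated} sink has $\beta_j=0$ and hence effective profit $c_{ij}\ge 0$) and on the fact that line~\ref{alg_auc:line11} immediately demotes the valuation to $\beta'_j$ once $\alpha_i$ is recomputed to $0$, so that any edge left carrying flow indeed satisfies $p_{ij}y_{ij}\le c_{ij}$. Getting this corner case airtight—ensuring the stored valuation of every surviving flow is consistent with a non-negative effective profit—is the step that needs the most care.
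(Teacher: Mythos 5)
Your proposal is correct and takes essentially the same route as the paper: the paper's case analysis on whether $\alpha_i>0$ or $\alpha_i=0$ between a push and the subsequent push-back is precisely your pair of invariants (tightness $\alpha_i\le c_{ij}-p_{ij}y_{ij}$ established at push time and preserved because $\alpha_i$ only decreases while $\beta_j$ only increases, plus the one-level lag $\beta_j\le(1+\epsilon)y_{ij}$ enforced by the companion valuation and {\bf UpdateBeta}), and your closing computation is the same single-factor-of-$(1+\epsilon)$ slack argument, with the non-negativity $p_{ij}y_{ij}\le c_{ij}$ playing the same role it plays implicitly in the paper. The only boundary case worth adding (which the paper's proof also elides) is the first raise of $\beta_j$ from $0$ to $\epsilon\min_i c_{ij}/p_{ij}$, where $\beta_j\le(1+\epsilon)y_{ij}$ fails for a flow-carrying edge with $y_{ij}=0$; the conclusion survives there because $p_{ij}\beta_j\le\epsilon c_{ij}$ holds directly.
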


\begin{proof}
When flow is pushed on the edge $ij$, $f_{ij} >0$ and 
$\alpha_i = c_{ij}-p_{ij}\beta_j$ or is $0$
(line 2 and 24 in {\bf Algorithm 2.2 Auction}) and the condition is true. 
Note that at this event, $y_{ij}=\beta_j$.
Subsequently, $\alpha_i$ or $\beta_j$ may change.
We consider the changes 
to $\alpha_i$ and $\beta_j$ and 
to the flow, $f_{ij}$, between the event in the algorithm
when positive flow is pushed on the edge $ij$ and the event when
this pushed flow reduces to zero, being pushed back on edge $ij$.
In between the two events, since we assume that $f_{ij} >0$,  $y_{ij}$ is either $\beta_j$ or
$\beta_j/(1+\epsilon)$.
We consider cases
depending on whether $\alpha_i >0$ or not.
\begin{enumerate}

\item
$\alpha_i >0$. Suppose $\alpha_i$ does not change but $\beta_j$ has
increased by a factor of  $(1+\epsilon)$.
In this case  $y_{ij}$ has a value equal to $\beta_{j}'$, otherwise $f_{ij}$
would be zero.
Since the value of $\beta_j$  changes 
by a factor of $(1+\epsilon)$, $\alpha_i \leq c_{ij} - p_{ij}\beta_j + \epsilon c_{ij}$. 
Note that the value of $\beta_j$ can not
increase further unless $f_{ij}$ reduces to zero.

If $\alpha_i$ has also changed, it only decreases, 
since $\beta_j$ is increasing. 
Therefore, the inequality still holds. 

\ignore{
In every iteration of the algorithm, after possible modification of
$\beta_j$, $\alpha_i$ is recomputed. Note that $\beta_j$ can
only increase by a factor of $(1+ \epsilon)$. 
Further note that $c_{ij}-p_{ij}\beta_j \geq 0$ during the course of the
algorithm. 
The first time that $c_{ij}-p_{ij}\beta_j \leq 0$,
$\alpha_i$ becomes zero 
and $y_{ij}$ is set to $\beta_j'$ so as to
ensure that flow is pushed back on edge $ij$. 
This would result in $f_{ij}$ possibly becoming zero.
Also, subsequently source $i$ is never considered  for pushing
flow $f_{ij}$ since line 1 of {\bf Algorithm 2.2 Auction} only
considers vertices with $\alpha_i >0$.
Thus $\beta_j$ would only increase if $f_{ij}=0$.
We consider cases
based on the value of $\alpha_i$ computed in line 24 of the algorithm,
{\bf Algorithm 2.2 Auction}:

\begin{enumerate}
}
\item
$\alpha_i =0$. No future change in the value of $\alpha_i$ can occur.
Prior to the step at which $\alpha_i$ is set to zero,
$\alpha_i = c_{ij}-p_{ij}\beta_j > 0 $ for some $j$.
This implies that $p_{ij}\beta_j < c_{ij}$. An increase in
the value of $\beta_j$ thus  still results in
$c_{ij}-p_{ij}\beta_j + \epsilon c_{ij} > 0$.
Thus the inequality holds when $\alpha_i$ is set to zero.
Further, when $\alpha_i$ becomes zero, 
the algorithm  sets the value of $y_{ij}$ to be $\beta_j'$ so as to
ensure that $\beta_j$ is not increased without the event that
flow is pushed back on  $ij$ and $f_{ij}$ is reduced to zero. 
And, subsequently, source $i$ is never considered  for pushing
flow $f_{ij}$ since line 1 of {\bf Algorithm 2.2 Auction} only
considers vertices with $\alpha_i >0$.
Thus if flow $f_{ij}>0$ then $\beta_j$ has not changed. And if
$\beta_j$ has changed, the flow is set to zero.
The inequality thus holds.

\ignore{
Note further that the first time $\alpha_i$ is set to
zero. Prior to this change in $\beta_j$ by a factor of $(1 + \epsilon)$, 
$c_{ij}-p_{ij}\beta_j > 0$.   
Thus $\alpha_i \leq c_{ij} - p_{ij}\beta_j + \epsilon c_{ij}$, 
}


\ignore{
If $\beta_j > 0$, $\beta_j$ can not change   
without reducing $f_{ij}$ to zero. This can be seen
by observing that in lines \ref{alg_auc:line1} and \ref{alg_auc:line2} of the 
algorithm, the flow with lowest assigned 
value $y_{i'j}$ is replaced. 
And for $\beta_j$ to change it must be that  $\forall i \  {\rm s.t.} f_{ij}>0 $, 
$y_{ij}= \beta_j$(line 3 in {\bf Algorithm UpdateBeta}).
However, for source $i$, with $\alpha_i=0$, the assigned value is $\beta'_j$ (See
line \ref{alg_auc:line11}).
Thus if $f_{ij} >0 $ then, since $\beta_j$ has not changed,
condition (1) holds true.

In case $\beta_j =0$ , $\beta_j$ can increase to 
$\epsilon \min_i c_{ij}/p_j$. 
But then $\alpha_i = 0 $ implies $c_{ij}=0$, since 
$max(c_{ij} -p_{ij}\beta_j,0) = \alpha_i$ and condition (1)
holds.
}
\ignore{
\item
$\alpha_i$ is positive. Suppose $\alpha_i$ does not chan
If $f_{ij}>0$, then either $y_{ij}=\beta_j$
or $y_{ij}=\beta_{ij}/(1+\epsilon)$.
If $y_{ij}=\beta_j$ then $\alpha_i = c_{ij}-p_{ij}$.
Since $\alpha_i$ is greater than $0$, $p_{ij}\beta_j < c_{ij}$. 
Therefore, if the value of $\beta_j$  changes 
by a factor of $(1+\epsilon)$, $\alpha_i \leq c_{ij} - p_{ij}\beta_j + \epsilon c_{ij}$. 
Since source $i$ is chosen so that $f_{ij}$ is being pushed at the valuation
$y_{ij} = \beta_j/(1 +\epsilon)$,  the value of $\beta_j$ can not
increase further unless $f_{ij}$ reduces to zero.

If $\alpha_i$ also changes, it only decreases, since $\beta_j$ is increasing. 
Therefore, the inequality still holds. 

}

\end{enumerate}

\ignore{
We get

\[
\forall ij:f_{ij} > 0,  
\alpha_i \leq c_{ij} - p_{ij}\beta_j +  \epsilon c_{ij}
\] 
by combining all the cases above.
}
\end{proof}

Furthermore,
\begin{lemma}
Algorithm 2.2 {\bf Auction}  terminates.
\end{lemma}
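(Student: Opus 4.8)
The plan is to show that the while loop of {\bf Auction} executes only finitely many times. I would organize the argument around the sink valuations $\beta_j$, which by Lemma \ref{beta-change} are non-decreasing, and partition the run into maximal \emph{phases} during which the whole vector $(\beta_j)_j$ is constant. The proof then has two parts: finitely many phases, and finitely many iterations per phase.

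First I would bound the number of times any $\beta_j$ is raised. When {\bf UpdateBeta} raises $\beta_j$ it multiplies it by $(1+\epsilon)$ starting from the positive value $\epsilon\min_i c_{ij}/p_{ij}$, so it suffices to exhibit a fixed ceiling. A raise occurs only at a saturated sink all of whose positive flow sits at the top value $\beta_j$, and such flow was placed by an active source $i$ that selected $j$ as its most profitable sink; at that push $c_{ij}-p_{ij}\beta_j=\alpha_i>0$, hence $\beta_j<c_{ij}/p_{ij}\le\max_i c_{ij}/p_{ij}$. A raise only multiplies this by $(1+\epsilon)$, so each $\beta_j$ stays below $(1+\epsilon)\max_i c_{ij}/p_{ij}$ forever and can be raised only $O(\epsilon^{-1}\log U)$ times. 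Therefore the total number of valuation changes, and thus the number of constant-$\beta$ phases, is finite.

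The main obstacle is the second part: bounding the iterations inside one constant-$\beta$ phase, where the danger is an infinite cascade of ever-smaller flow shifts with $\beta$ frozen. Here I would use the stated fact that all primal variables are rationals with denominators bounded by a fixed $D$, so every strictly positive flow change $x$ satisfies $x\ge 1/D$. Define the companion price-mass $C=\sum_j\sum_{i:\,y_{ij}=\beta'_j}p_{ij}f_{ij}$. With $\beta$ fixed, each acting source's $\alpha_i=\max_j(c_{ij}-p_{ij}\beta_j,0)$ is a constant positive value, so it never drops to $0$ within the phase, line \ref{alg_auc:line11} never fires, and $C$ can only decrease. Each swap (the $i'\neq i$ case) converts companion flow of $i'$ at $j$ into top-value flow of $i$ at $j$, strictly decreasing $C$ by at least $x\,p_{ij}\ge 1/D$; since $C\le\sum_j b_j$ is bounded, only finitely many swaps occur before some $\beta_j$ must rise and end the phase (this is the quantitative companion to Lemma \ref{beta-increase}). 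It remains to bound the other iteration types within the phase: each push to an unsaturated sink (line \ref{alg_auc:line4}, where $\beta_j=0$) either saturates that sink or exhausts the acting source, saturation is permanent within a phase because swaps preserve a sink's total price-mass, and every reactivation of an exhausted source is caused by a reduction of its flow, i.e. by a swap; the $i'=i$ case raises $\beta_j$ and ends the phase. All of these are therefore finite in number.

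Combining the two parts, each phase contains finitely many iterations and there are finitely many phases, so {\bf Auction} terminates. The delicate point lives entirely in the per-phase bound: without the bounded-denominator rationality one could imagine an infinite sequence of vanishingly small swaps with $\beta$ frozen, and it is the monotone potential $C$ together with the uniform lower bound $1/D$ on positive flow changes that rules this out, whereas the bound on the number of phases is a routine geometric-growth-against-a-fixed-ceiling count.
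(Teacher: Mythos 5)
Your proof is correct and rests on the same two pillars as the paper's: a geometric-growth-against-a-ceiling count of the $\beta_j$ raises, and the bounded-denominator rationality of the flow values to kill the possibility of infinitely many ever-smaller reassignments while $\beta$ is frozen. The difference is organizational and, in one place, substantive in a good way. The paper argues by classifying what each push accomplishes --- surplus exhausted, sink saturated, or a back-edge zeroed (invoking Lemma~\ref{beta-increase} for the last) --- and disposes of the remaining case with the one-line remark that the number of non-zeroing push-backs is finite ``since the amount of change of flow at every instance is a rational with bounded denominator.'' You instead cut the run into constant-$\beta$ phases and make that vague step precise with the monotone potential $C=\sum_j\sum_{i:\,y_{ij}=\beta'_j}p_{ij}f_{ij}$, each swap costing at least $1/D$; this is the quantitative content the paper gestures at but does not write down, and it is the more checkable argument. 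One small inaccuracy: within a phase it is not true that line~\ref{alg_auc:line11} never fires, because a source may enter the phase carrying an $\alpha_i$ computed at older, smaller $\beta$ values, and its first recomputation at line~\ref{alg_auc:line101} can drop it to $0$. This does not damage the proof --- once $\alpha_i=0$ the source is never selected again, so line~\ref{alg_auc:line11} fires at most once per source over the entire run, contributing a total increase to $C$ of at most $n\sum_j b_j$ on top of the at most $\sum_j b_j$ added at each of the finitely many $\beta_j$ raises --- but you should state the potential argument globally (total increase to $C$ is bounded, each swap decreases it by at least $1/D$) rather than claiming per-phase monotonicity.
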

\begin{proof}
At each iteration of the algorithm, flow is pushed on an edge and 
$ \forall j, \beta_j$ either remains the same or
the value of $\beta$ increases for some $j$ (Lemma~\ref{beta-change}). 

When flow is pushed on an edge $ij$, either (i) $s_i$ goes to zero  (ii) the budget at a
sink $d_j$ is met or  (iii) the flow on a back edge $ji'$ is reduced to zero.
Note that, by lemma~\ref{beta-increase}, in case (iii) flow cannot be pushed (in the forward direction)
on the edge $i'j$ without
increasing the value of $y_{ij}$. 

The first event, case (i), results in flow being pushed from $i$ to $j$
such that $y_{ij}=\beta_j$ and will happen only once per source-sink pair
until an event where flow is pushed back on an edge $j'i$ to add to  surplus at $i$
occurs. Thus this event occurs whenever new surplus is generated, which
is bounded by the number of times  flow is pushed back on an edge without
case (iii) ocurring.
The number of times flow is pushed back on an edge, without reducing the
flow to zero (case (iii)), is bounded since 
the amount of change of flow at every instance is a rational with
bounded denominator.

The second event, case (ii),
happens  at most $O(m)$ times, since after that event
the sink remains saturated during the subsequent operations of the algorithm. 
The third event, case (iii), occurs at most $n$ times before all edges 
leading to a sink have flow reduced to zero, and consequently the value of
$\beta_j$ must rise.
This bounds the total number of operations
in the case that $\beta_j$ remains the same.
Further, the number of increases of $\beta_j$ is bounded since $\beta_j$ 
increases by a factor of $1+\epsilon$ and its value is bounded above
by $\max_j \{ c_{ij} \}$. Thus the algorithm terminates in a finite number of steps.
\end{proof}

In order to show good convergence of the algorithm we 
introduce further modifications. However, when the algorithm terminates
then the following is true:
\vspace*{12pt}
\begin{lemma}
At termination, the difference between the value of 
primal solution $\sum_{ij} c_{ij}f_{ij}$ and the value of the 
dual solution is at most $ \epsilon \sum_{ij} c_{ij}f_{ij}$
\label{lemma_approx}
\end{lemma}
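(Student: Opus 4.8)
The plan is a standard weak-duality argument driven by approximate complementary slackness, exploiting the fact that at termination two of the three slackness conditions hold \emph{exactly} while the third holds approximately via Lemma~\ref{lemma_primalCS}. The dual objective is $D = \sum_i \alpha_i a_i + \sum_j \beta_j b_j$, and the idea is to rewrite each of its two sums as a sum over edges weighted by $f_{ij}$, then compare against the primal objective $\sum_{ij} c_{ij} f_{ij}$.

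First I would establish that CS2 and CS3 hold exactly at termination. For CS3 this is immediate from Lemma~\ref{beta-change}: if $\beta_j > 0$ then sink $j$ is saturated, so $\sum_i p_{ij} f_{ij} = b_j$, and hence $\beta_j b_j = \beta_j \sum_i p_{ij} f_{ij}$ in every term (the case $\beta_j = 0$ makes both sides zero), giving $\sum_j \beta_j b_j = \sum_{ij} p_{ij}\beta_j f_{ij}$. For CS2, the while loop of Algorithm~\ref{alg_auc} halts only when no source has both $\alpha_i > 0$ and positive surplus; thus for every $i$ either $\alpha_i = 0$ or $\sum_j f_{ij} = a_i$ (using primal feasibility $\sum_j f_{ij} \le a_i$ to upgrade the guard to equality), so $\alpha_i a_i = \alpha_i \sum_j f_{ij}$ and $\sum_i \alpha_i a_i = \sum_{ij}\alpha_i f_{ij}$.

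Combining these two identities gives $D = \sum_{ij} (\alpha_i + p_{ij}\beta_j) f_{ij}$, a sum in which only edges with $f_{ij} > 0$ contribute. On each such edge I would invoke Lemma~\ref{lemma_primalCS}, rearranged into $\alpha_i + p_{ij}\beta_j \le (1+\epsilon)c_{ij}$, to obtain $D \le (1+\epsilon)\sum_{ij} c_{ij} f_{ij}$. Finally, since the $f_{ij}$ are primal-feasible and $(\alpha,\beta)$ is dual-feasible (by the earlier feasibility lemmas), weak duality gives $\sum_{ij} c_{ij} f_{ij} \le D$; subtracting the primal value then yields $D - \sum_{ij} c_{ij} f_{ij} \le \epsilon \sum_{ij} c_{ij} f_{ij}$, which is the claim.

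The argument is essentially mechanical once the slackness facts are in hand, and the only delicate point is the exactness of CS2 and CS3 at termination. CS3 rests cleanly on Lemma~\ref{beta-change}, but CS2 depends on reading the loop guard correctly and pairing it with primal feasibility to turn $\sum_j f_{ij} \le a_i$ into an equality precisely on the sources with $\alpha_i > 0$. I expect that to be the one step where an oversight could creep in, so I would state it explicitly before carrying out the algebra.
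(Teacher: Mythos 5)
Your proposal is correct and follows essentially the same route as the paper: both decompose the dual value into the three complementary-slackness residuals, kill the $\alpha$- and $\beta$-residuals exactly using the termination condition and the fact that unsaturated sinks have $\beta_j=0$ (Lemma~\ref{beta-change}), and bound the remaining edge term by $\epsilon\sum_{ij}c_{ij}f_{ij}$ via Lemma~\ref{lemma_primalCS}. The only cosmetic difference is that you substitute the exact equalities into $D$ before comparing, whereas the paper bounds the three residual terms $\Delta_1,\Delta_2,\Delta_3$ separately.
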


\begin{proof}
The value of the dual solution is 
\begin{eqnarray}
&=&\sum_i a_i\alpha_i  + \sum_j b_j\beta_j  \\
&=&\sum_i (a_i - \sum_j f_{ij})\alpha_i + \sum_j (b_j - \sum_i p_{ij}f_{ij})\beta_j + 
\sum_{ij}f_{ij}\alpha_i + \sum_{ij}f_{ij}p_{ij}\beta_j
\end{eqnarray}

When we subtract the value of the primal solution $\sum_{ij} c_{ij}f_{ij}$, from the
above, the  difference is
\begin{eqnarray}
\sum_i (a_i - \sum_j f_{ij})\alpha_i + \sum_j (b_j - \sum_i p_{ij}f_{ij})\beta_j
-\sum_{ij} f_{ij}(c_{ij} - \alpha_i - p_{ij}\beta_j)  
\end{eqnarray}

Thus, the total absolute difference is at most $\Delta_1 + \Delta_2 + \Delta_3$ where

\[
\Delta_1 = |\sum_{ij} f_{ij}(c_{ij} - \alpha_i - p_{ij}\beta_j)|
\]

\[
\Delta_2 = |\sum_{i} \alpha_i(a_{i} - \sum_j f_{ij})|
\]

\[
\Delta_3 = |\sum_{j} \beta_j(b_{j} - \sum_i p_{ij}f_{ij})|
\]

From lemma \ref{lemma_primalCS}, we have 

\[
\forall ij:f_{ij} > 0,  
\alpha_i \leq c_{ij} - p_{ij}\beta_j +  \epsilon c_{ij}
\]

Therefore, 

\[
\Delta_1 \leq  \epsilon \sum_{ij} c_{ij}f_{ij} 
\]

From the termination condition of the algorithm, we know that for any source
$i$ such that $a_{i} - \sum_j f_{ij} > 0$, we have $\alpha_i = 0$.

Therefore 
\[
\Delta_2 = 0
\]


For all unsaturated sinks $\beta_j = 0$. Therefore, 
\[
\Delta_3 = 0
\]

Combining all of the above we get the lemma.
\end{proof}

\label{section_auction}

\subsection{Modified Algorithm}
\label{section_modified}
\label{sec_modified}

In order to show the required complexity bound, and make its analysis simpler, we make
a few modifications to the above algorithm. We address the details of
the data structures required and present the complexity analysis.
We also show that the modifications do not affect the correctness of 
algorithm \ref{alg_auc} {\bf Auction} as proved in section \ref{sec_auc}.

We introduce the concept of a preferred edge, back edge and derived graph 
as explained below. 

\paragraph*{Preferred Edge: }
For a given set of values for $\alpha_i$ and $\beta_j$ we designate
a preferred edge $Pr_i$ for each source $i$. This edge maximizes
$c_{ij}-p_{ij}\beta_j$, i.e. $Pr_i =  ik$ where 
$ k= {\rm argmax}_j (c_{ij}-p_{ij}\beta_j)$. 
In the case that multiple values of $j$ satisfy the condition, only one is picked.
The source $i$ will be required to push
flow along this edge until the edge is no longer is preferred. 

\paragraph*{Back-edge: }
All edges $ij$ such that $f_{ij} > 0$ and $y_{ij}<\beta_j$.
Flow is pushed back on these edges.
The set of all such edges is termed $B$.

\paragraph*{Derived Graph: }
We define the capacitated derived graph $H$ with respect to an intermediate solution
maintained by the modified algorithm. This graph consists of all
the sources and sinks in the given bipartite graph and directed edges 
which are either the preferred edges or the back edges.
All the preferred edges are oriented in the forward direction, whereas 
all the back-edges are oriented in the reverse direction, indicating the
directions along which flow will be pushed. The capacity of a preferred edges
is infinite whereas the capacity of a back-edge is the amount of flow carried on
that particular edge. Formally,
$H=(S,T,E,c)$  where 
\[ E= \{ij | i \in S, j \in T, ij  \in Pr_i \} \cup
\{ ji , j \in T , i \in S,  ij \in B \} \] 
and the capacity function $c: E \rightarrow \mathbb{Z}^+$
is as follows:
\[ c(ij)= \infty  \ \ {\rm iff} \ ij \in Pr_i \]
\[ c(ji)= f_{ij}  \ \ {\rm iff} \ ij \in B \] 

Before starting with the main algorithm, 
the following  preprocessing step eases the description: 
\paragraph{Pre-processing step:}
Consider a preferred edge $ij \in Pr_i$. If edge $ij$ is also a back edge (thus
creating a 2-cycle ($i\rightarrow j \rightarrow i$) in $H$) and 
is not the only back edge from $j$ then 
the valuation  at which flow is shipped from source $i$ to sink $j$,
i.e. $y_{ij} = \beta_j/(1+\epsilon)$,  is increased to $\beta_j$.
Note that this modification is not done when the sink has only
one back edge  $ij$ since the existence of at least one valuation equal to
$\beta_j/(1+ \epsilon)$ does not allow a  raise  in the price of sink $j$. 
This ensures:
\begin{enumerate}
\item
If there is a 2-cycle on sink $j$, sink $j$ has only
one back edge. 

\item
$\beta_j$ does not change. 
\end{enumerate}
In the derived graph $H$, if $ij \in Pr_i$ before the modification then $ij \in Pr_i$ 
after the modification also,
and source $i$ will choose to push flow along its preferred edge.

The algorithm in section~\ref{sec_basic} pushed flow from a source to a sink 
causing flow
to be pushed back to some other source and consequent independant processing of
that  other source was performed.
The modified algorithm in this section, starting from a source with 
a surplus and positive $\alpha_i$, finds a path along the edges 
of the derived graph $H$ and pushes  the flow according to the type of path discovered,
thus ensuring that the surplus flow is processed in one push of flow along
a flow path.
The path found by the algorithm ends at either
(i)  an unsaturated sink,(ii) a source with $\alpha_i = 0$ or (iii) a cycle.



\begin{figure}
\centerline{\epsfysize=200pt\epsfbox{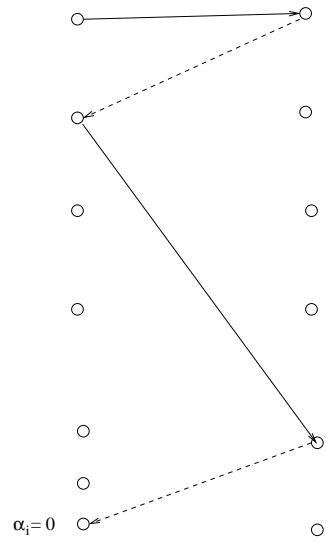}}
\caption{ An example where path ends in a source with $\alpha_i = 0$} 
\end{figure}

\begin{figure}
\centerline{\epsfysize=200pt\epsfbox{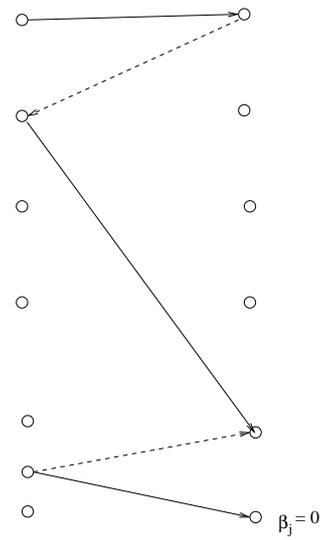}}
\caption{ An example where path ends in a sink with $\beta_i = 0$} 
\end{figure}

\begin{figure}
\centerline{\epsfysize=200pt\epsfbox{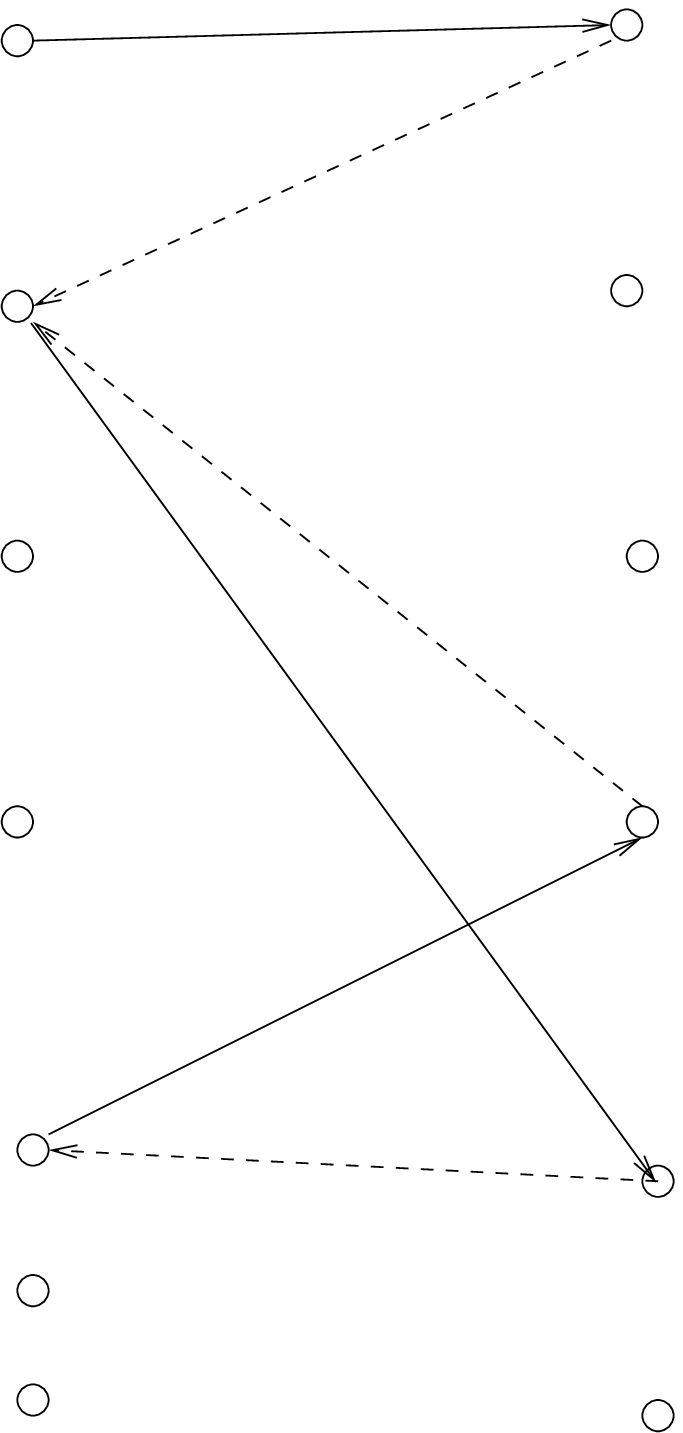}}
\caption{ Example where the path ends in a cycle}
\end{figure}

We detail the various cases that arise.
A source $i$ is picked such that $\alpha_i >0$ and 
$\sum_j f_{ij} < a_i $.
And  a  path, $P$, is discovered by, say a depth first search,
along the edges of the derived graph $H$.
There are  different types of paths/cycles that may arise.

\begin{enumerate}
\item
{\bf Type-1 paths:} We have two cases, one
where the path finds a  source $i_k$ where $\alpha_i = 0$ and the second where
the path finds  an unsaturated sink $j_k$ where $\beta_{j_k} = 0$. 
In the first case, let $P= i_0 , j_0 , \ldots i_k$
and in the second case let $P= i_0 , j_0 , \ldots , j_k$. 
In both cases flow is pushed as follows: in the first
case flow is pushed into 
source $i_k$ (i.e. returned to the source) and in the second
case flow is pushed into a sink ($j_k$), 
without violating feasibility or complementary slackness conditions. 
However, the flow may be limited by the capacity of the edges on the path, $P$.
We define the capacity of the path, $c(P)$ 
as $c(P) = \min  \{ c(e) | e \in  B \cap P   \}$
The processing of the flow is as follows:
\begin{enumerate}
\item
The flow is limited by the surplus, $s_{i_0}$, 
at the starting source $i_0$. Push all the surplus
along the path. The surplus at the source disappears.
This is true for both  cases.

\item
The amount of flow is limited by the capacity of an unsaturated sink on 
path $P$ (second case), i.e. $c(P) \geq d_{j_k}$. 
In this case, we push enough
flow to saturate the sink and consequently, raise the value of $\beta_j$. 

\item
The amount of flow is limited by the capacity of  the path $c(P)$.
In this case, we push flow  from the source along 
the path. Let $f$ be the flow starting at the source.
Let $e = j_l i_{l+1}$ be the first edge such that $f >  c(e)=c(j_l , i_{l+1})$.
The amount of flow that can be pushed along this edge is limited by $c(e)$.
Pushing flow along this edge reduces  the capacity of the edge  $e$ to $0$.
The surplus $f-c(e)$  is added to the surplus of the previous source $i_l$ 
which has $i_l  j_l$ as the preferred
edge. 
The flow pushed back through $j_l  i_{l+1}$ is accumulated at $i_{l+1}$ as a surplus.
The surplus at $i_{l+1}$  is pushed  further along the path recursively. 
In the end one or more back edges would have zero flow and surplus may be left at
each source ocurring just prior to the sink where the back edge with zero flow is incident.

\end{enumerate}

\item
{\bf Type-II paths:}
The path ends at a 2-cycle ($ji,ij$).
A  2-cycle implies $ji$ is the only back-edge on sink $j$, as soon as the flow
is again pushed along edge $ij$, $y_{ij}$ changes 
(line 15, {\bf Algorithm 2.2  Auction}) and the only back edge disappears. 
This results in an increase in the value of $\beta_j$.

\item
{\bf Type-III paths:}
A cycle is discovered since one of the sources is encountered again.
Let the path discovered be
denoted by $i_0,j_0...i_k...j_qi_k$ which is decomposed into
a simple cycle $C=i_k j_k \ldots j_q i_k$
and a simple path $P_1 = i_0 , j_0 \ldots i_k$.
The source $i_k$ being the {\em entry point}, i.e.,
the first source that appears in the cycle. 
First the flow is pushed along the path $i_0...i_k$ 
taking into account the capacities as in Type-I paths.
Any additional surplus that appears at source $i_k$ is handled as follows. 

Consider an imaginary procedure that simulates sending flow around the cycle, 
$C $, multiple times, while conserving the
flow at each of the intermediate sources in the cycle.

We define  the {\em transfer ratio} $\rho_l$ for
a back-edge $j_{l}i_{l+1}$ w.r.t. $C$ 
to be  $p_{i_lj_l}/p_{i_{l+1}j_l}$ where $i_lj_l$ is the preferred edge for 
source $i$ and  $j_l i_{l+1}$ is a back edge.

Further define the {\em cumulative transfer ratio} $\rho^c_{l}$ 
for the back-edge $j_l,i_{l+1}$ with 
respect to a cycle $ C$
to be the amount that goes through the edge $j_l, i_{l+1}$
if one unit of flow is pushed along the cycle starting at source $i_k$. This is equal
to $\prod_{z = k}^{l} \rho_{z}$. 

Let the cumulative transfer ratio of the cycle be 
$\rho_\odot  = \prod_{l = k}^{q} \rho_{l}$. This is the amount of flow that reaches
the first source, $i_k$, back after being pushed all around the cycle starting with a unit of
flow at $i_k$. 

Let us call the push of flow once around the cycle as one revolution.
If we were to send the surplus at source $i_k$, $s_{i_k}$ units of flow,
repeatedly in this cycle for say, $r$ times, while conserving
the flow at each of the intermediate sources, the total flow shipped through the 
back edge $j_li_{l+1}$ is  given by the geometric 
series $F(l,r) = \sum_{z=0}^{r} s_{i_k} \rho^c_l (\rho_\odot)^z$. 
Since this back edge is capacitated, the maximum
number of times flow can be pushed around the cycle before saturating this edge to capacity,
which we term as the {\em limiting number of revolutions},  
is given by $R_l = {\rm argmax}_{r} 
(F(l,r) \leq f_{i_{l+1}j_l})$. 
This number can be computed in $O(1)$ time. 
It ignores that other edges may also limit the flow.
We thus compute $R_{min} = \min_{l \in B \cap C} R_l$ to compute
the {\em limiting number of revolutions for the cycle}, i.e.
the smallest value of the limiting number of revolutions, computed from  amongst
the back edges in the cycle, $C$.

The flow on the edges of the cycle is modified as:
\[f_{i_lj_l} \leftarrow f_{i_lj_l} +  
\sum_{j=0}^{R_{min}} s_{i_1} \rho^c_{l-1} (\rho_\odot)^{j}\]
for forward edges $i_lj_l$ and
\[f_{i_{l+1}j_l} \leftarrow f_{i_{l+1},j_l} - 
\sum_{j=0}^{R_{min}} s_{i_1} \rho^c_l (\rho_\odot)^{j}\]
corresponding to back edges.

We consider cases based on the value of $\rho_\odot$:
\begin{enumerate}

\item
$\rho_\odot < 1$. In this case we have a decreasing gain cycle, $C$,
i.e. the value of the flow
decreases as it is sent around the cycle. We compute the limiting number of 
revolutions for each back edge in the cycle, $C$. 
Note that this number may be infinity for some edges. 
However, this can be determined  
by computing the value to which the geometric series, $F(l,)$, converges 
and does not require simulating an infinite sequence of pushes. 



After $R_{min}$ rounds, the remaining
surplus at $i_k$ is reduced to $s_{i_k}\rho_ \odot ^ {R_{min}}$. 
Note that if $R_{min}$ is infinity then the surplus at the source, $s_{i_k}$ decreases to zero.
The surplus at every other
source in the cycle remains unchanged.

Consider the cases depending on the value of the surplus at $i_k$:

(i) the surplus remaining at $i_k$ becomes zero, or,

(ii) the flow on one of the back edges reduces to zero.
This case occurs when the number of revolutions is limited due to one 
of the back-edges.
Since $R_{min}$ integral revolutions need not achieve this, one more sequence
of flow pushes 
around the cycle may be required. 

\item
$\rho_\odot \geq 1$. In this case the flow increases or remains the same as it 
goes along the
cycle. There  will be thus no decrease of surplus at the starting source vertex $k$. 
However, for each of the back-edges 
$R_l = {\rm argmax}_{r} 
(F(l,r) \leq f_{i_{l+1}j_l})$
{\em is a finite number} since each push of flow around the cycle reduces the capacity of
the back-edges. 
The smallest of these  values is chosen and the maximum 
number of rounds of  flow 
to be pushed around the cycle is computed.
Again, since $R_{min}$ integral revolutions need not achieve this, 
it may  require one more sequence of flow pushes  around the cycle 
to reduce the flow on a back-edge to zero.

\end{enumerate}

In the cases above where the surplus is not reduced to zero,
at least one edge is saturated by the flow pushed around the cycle.
Any surplus remaining at other nodes is pushed around the cycle to accumulate
at the source(s) just before the saturated edge(s).

\end{enumerate}

The overall structure of the algorithm is similar to that in 
section~\ref{sec_auc}.
After pushing the flow through the path or cycle discovered, values of $\beta$
are updated if required, preferred and back edges computed as necessary and
the algorithm repeats until all surplus is removed or
until complementary slackness conditions are satisfied.
This algorithm is termed as the  {\em Modified Auction} algorithm.

\paragraph{Data Structures:}
In order to determine $\max_j c_{ij} -p_{ij}\beta_{j}$ for each $i$ we
maintain a heap. This heap can be updated when $\beta_j$ changes in
$O(\log{m})$ time. The UpdateBeta procedure, as in the previous algorithm,
utilizes a set data structure
to determine the need for update of $\beta_j$ in $O(1)$ time.

We next discuss how this approach  improves the complexity. However,
we can actually solve a more generalized problem, i.e. where 
each edge $(u,v), u \in S , v \in T$ has a capacity $u_{ij}$.
So we defer a formal description and  proofs of the above approach to
section~\ref{Edge-formal}.



\vspace*{12pt}
\subsection{A discussion on the correctness and complexity of Modified Auction } 

We discuss the correctness and  time complexity of the modified algorithm.
Formal proofs are provided when we consider the generalized model including
edge capacities.

\paragraph{Correctness:}
In order to justify that the modified algorithm 
terminates with a $(1-\epsilon)$ approximate solution, 
first, we observe that while the algorithm \ref{alg_auc} {\bf Auction} 
does not require any particular order 
in which the surplus sources are picked, the sequence of choices of 
the sources from where the surplus is reduced,  as imposed by 
the modifications, still ensures  primal and dual constraints.

In the case that the algorithm encounters cycles,
the modified algorithm 
essentially simulates multiple steps of the algorithm
\ref{alg_auc} {\bf Auction}. The end result is exactly the same if each step was 
performed individually
and repeatedly. The amount of flow shipped out of a source at each step, however, may be 
{\em smaller} than what Algorithm \ref{alg_auc} {\bf Auction} would have shipped out. 
However, the primal and dual constraints are still ensured.

The modified algorithm makes the same changes to the variables $\alpha_i$,
$\beta_j$ or $f_{ij}$ as described in the previous section. 
The only modification is in the change to the  variables, $y_{ij}$ in the 
preprocessing procedure. This 
change does not affect the claim that $\beta_j \leq y_{ij}(1+\epsilon)$ for  all $i$ in the proof
of lemma \ref{lemma_primalCS} as $y_{ij}$ is set to $\beta_j$ itself. 
The pre-processing procedure 
leaves at least one back-edge intact; this ensures that 
no change in $\beta_j$ takes place as 
a result of the preprocessing. All the arguments involving $\beta_j$ are, therefore, unaffected.
Thus primal and dual feasibility is maintained.
Further the complementary slackness conditions are satisfied.

\ignore{
Thus we can state that
\begin{itemize}

\item
The primal solution obtained by the algorithm {\em Modified Auction} is feasible.

\item
The dual solution obtained by the modified algorithm is feasible.

\end{itemize}

Due to the fact that the change in $\beta_j$ and consequent changes in other variables are
locally ensured by the same procedure as  in the previous section, the following holds:
\begin{itemize}

\item
The complementary slackness conditions {\bf CS1} and {\bf CS2}
are satisfied exactly and condition {\bf CS3}
is satisfied approximately.

\item
At termination the difference between the value of 
primal solution $\sum_{ij} c_{ij}f_{ij}$ and the value of the 
dual solution is at most $ \epsilon \sum_{ij} c_{ij}f_{ij}$

\end{itemize}
}
\paragraph{Complexity:}
For the  complexity
we consider the  sequence of steps between two 
successive raises of the value of $\beta$, termed as a {\bf phase}, in the algorithm.

We use the 
fact that for all $j$, the total number of times the value $\beta_j$ 
rises is bounded.
We show that each operation during the course of the algorithm
can be charged to a rise in the value of $\beta_j$ for some $j$.
In fact, $\forall \beta_j$, every rise in the value of $\beta_j$ is required 
to be charged at most $3n^2+n$  times during the  algorithm.
We account for the work required for each push of flow along a path or cycle
via a charging argument. 
There are at most $n$ pushes of flow along edges of a path
before either:
(i) a 2-cycle  is encountered,
\ignore{
If the 2-cycle is between $i$ and $j$, source $i$ increases the
assigned value $y_{ij}$ to $\beta_j$. Since $ji$ was the only back-edge
on $j$, there is a change in $\beta_j$. We charge the $O(n)$ pushes 
to this rise in $\beta_j$.
}
(ii) a source with $\alpha=0$ is encountered, 
(iii) an unsaturated sink is encountered,
\ignore{
The previous cases can be considered together with this case as they are similar. In both 
these cases, the surplus from a node travels along the edges of the derived graph
and reaches a source or a sink where no further push is required. There
are two cases for each of them:

\begin{itemize}

\item
The surplus at the starting node disappears and no new surplus is created

In this case, we charge the $O(n)$ pushes of flow to the node at which the surplus 
disappeared. We charge such a node only once for the disappearnce of the surplus. 
Note that any subsequent
appearance of surplus will be charged to the creation of the surplus and
is caused only in the case when the flow on 
a back-edge becomes zero, which is the second case explained below.  
There are, therefore, at most $n^2$ pushes charged to the nodes
before either the surplus at each source is removed or a surplus appears. 

\item
There is a back edge $ji$ such that $f_{ij}$ becomes zero and a new surplus
at $i'$ is created where $i'j$ was the preferred edge for $i'$.

In this case we assign a charge of $O(n)$ to
the creation of the surplus at this back edge.
At this step the valuation the edge $y_{ij}$ has been raised and thus
this back edge can not re-appear unless $\beta_j$ changes. We allow an additional
charge of $n$ to pay for the possible disappearance of the flow
from $i'$.

There are at most $n$ back-edges on sink $j$. The value $\beta_j$ changes 
when the flow on each of these edges is reduced to zero. Therefore,
the total charge on all the back-edges incident out of a sink is at most $2n^2$ 
pushes of flow after which there is a change in the value $\beta_j$. This 
implies we have $2n^2$ pushes charged to a rise in $\beta_j$. 

\end{itemize}
}
or (iv) a cycle is encountered. 
Note that in the last case all the calculations for 
determining the limiting number of
revolutions can be done in $O(n)$ time, since the cumulative transfer ratio can be 
determined by traversing the cycle once. As noted above, in all the cases
that arise when flow is pushed around the cycle,
either the surplus that has been pushed into the cycle reduces to 0 
or the flow on one of the back edges is reduced to zero. The $O(n)$
amount of work can be charged to the disappearance 
of surplus at a source or reduction of flow to zero on a back-edge $ji$,
and hence an
increase in  $\beta_j$, subsequently, when all back edges incident to
the sink $j$ have the corresponding flow through them reduced to zero.
The other cases have a similar analysis.
The detailed complexity of the algorithm will be described in the next section.



\ignore{
The algorithm terminates in $O(\epsilon^{-1}(n^2+ n\log{m})m\log U)$ time, where\\ $U =  
\max_{ij}(\frac{c_{ij}}{p_{ij}}) / \epsilon \min_{ij} (\frac{c_{ij}}{p_{ij}})$.
From  the above, there are $O(n^2)$ charges per rise in  the  
value $\beta_j$. Once a phase is over, the change in value of $\beta_j$ causes sources to 
update the heaps (one at each source). This takes no more than $O(n\log{m})$ time.

The update  procedure takes $O(1)$ amortized time by maintaining a set of
values of  $y_{ij}$,
each value being  either $\beta_j$ or $\beta'_j$.
Thus, we require $O(n\log{m})$ at the end of a phase
to update all the data structures. The preprocessing requires $n$ steps. 

Each rise causes the quantity $\beta_{j}$ to grow by a factor of $1+\epsilon$, starting with
$\epsilon \min_i (\frac{c_{ij}}{p_{ij}})$. 
For each sink $t_j$ such that $\beta_j>0$, there exists an $i$ such that $s_i$ ships flow to $t_j$ and  thus
$0 \leq \alpha_i = c_{ij} - p_{ij}\beta_j$. This  leads to $\beta_j \leq c_{ij}/p_{ij}
\leq \max_{ij}\frac{c_{ij}}{p_{ij}}$. There are $m$ different $\beta_j$. Therefore
there can be no more than $O(m\log_{1+\epsilon} U)$ total changes in  $\beta_j$ for all $j$.  
Combining this with the fact that every rise in $\beta_j$ is charged  $O(n^2+n\log{m})$ amount
of work and $\log_{1+\epsilon}U = O(\epsilon^{-1}\log U)$, we have the result.
}
\ignore{
We use a sequence of lemmas.

\begin{lemma} 
\label{lemma_oneloop}
If there is a 2-cycle between a source $i$ and a sink $j$ then there
is no back-edge incident onto  $j$ other than $ij$.
\end{lemma}

The above  lemma follows directly from  procedure cleanup. The cleanup
procedure removes every 2-cycle unless there is no other back-edge
on the sink $j$.  

\begin{lemma}
$\forall \beta_j$, every rise in the value of $\beta_j$ is required 
to be charged at most $3n^2+n$  times during the  algorithm.
\label{lemma_charge}
\end{lemma}

\begin{proof}
We account for the work required for each push of flow along a path or cycle
via a charging argument. 
We consider the following cases that are exhaustive. 

There are at most $n$ pushes of flow along edges of a path
before one of these happens

\begin{enumerate}
\item
A 2-cycle  is encountered

If the 2-cycle is between $i$ and $j$, source $i$ increases the
assigned value $y_{ij}$ to $\beta_j$. Since $ji$ was the only back-edge
on $j$, there is a change in $\beta_j$. We charge the $O(n)$ pushes 
to this rise in $\beta_j$.

\item
A source with $\alpha=0$ is encountered. This case is analyzed together with the
next case.

\item
An unsaturated sink is encountered.

The previous cases can be considered together with this case as they are similar. In both 
these cases, the surplus from a node travels along the edges of the derived graph
and reaches a source or a sink where no further push is required. There
are two cases for each of them:

\begin{itemize}

\item
The surplus at the starting node disappears and no new surplus is created

In this case, we charge the $O(n)$ pushes of flow to the node at which the surplus 
disappeared. We charge such a node only once for the disappearnce of the surplus. 
Note that any subsequent
appearance of surplus will be charged to the creation of the surplus and
is caused only in the case when the flow on 
a back-edge becomes zero, which is the second case explained below.  
There are, therefore, at most $n^2$ pushes charged to the nodes
before either the surplus at each source is removed or a surplus appears. 
\ignore{
Appearence of a surplus imples a change in the value
of $\beta$ at the sink $j$ where the flow on the back edge 
We call the sequence of steps between rises in
$\beta$ as a phase. There is a change in some $\beta_j$ at the 
end of the phase. We charge these $n^2$ pushes to this rise
in $\beta_j$. If no $\beta$ rises the algorithm terminates.
}
\item
There is a back edge $ji$ such that $f_{ij}$ becomes zero and a new surplus
at $i'$ is created where $i'j$ was the preferred edge for $i'$.

In this case we assign a charge of $O(n)$ to
the creation of the surplus at this back edge.
At this step the valuation the edge $y_{ij}$ has been raised and thus
this back edge can not re-appear unless $\beta_j$ changes. We allow an additional
charge of $n$ to pay for the possible disappearance of the flow
from $i'$.

There are at most $n$ back-edges on sink $j$. The value $\beta_j$ changes 
when the flow on each of these edges is reduced to zero. Therefore,
the total charge on all the back-edges incident out of a sink is at most $2n^2$ 
pushes of flow after which there is a change in the value $\beta_j$. This 
implies we have $2n^2$ pushes charged to a rise in $\beta_j$. 

\end{itemize}

\item
A cycle is encountered. All the calculations for determining the limiting number of
revolutions can be done in $O(n)$ time, since the cumulative transfer ratio can be 
determined by traversing the cycle once. As noted above, in all the cases
that arise when flow is pushed around the cycle,
either the surplus that has been pushed into the cycle reduces to 0 
or the flow on one of the back edges is reduced to zero. The $O(n)$
amount of work can be charged to the disappearance 
of surplus at a source or reduction of flow to zero on a back-edge $ij$,
and hence an
increase in  $\beta_j$, subsequently, in the same way as 
described for simple paths above. 

\end{enumerate}

From the case analysis above, we conclude the total charge on a rise
in $\beta_j$ is  $O(n)$ from case 1 and $O(n^2)$ from case 2,3 and 4, hence
proving the lemma.
\end{proof}

\begin{lemma}
The algorithm terminates in $O(\epsilon^{-1}(n^2+ n\log{m})m\log U)$ time, where\\ $U =  
\max_{ij}(\frac{c_{ij}}{p_{ij}}) / \epsilon \min_{ij} (\frac{c_{ij}}{p_{ij}})$.
\end{lemma}

\begin{proof}
From lemma \ref{lemma_charge}, there are $O(n^2)$ charges per rise in  the  
value $\beta_j$. Once a phase is over, the change in value of $\beta_j$ causes sources to 
update the heaps (one at each source). This takes no more than $O(n\log{m})$ time.

The update  procedure takes $O(1)$ amortized time by maintaining a set of
values of  $y_{ij}$,
each value being  either $\beta_j$ or $\beta'_j$.
Thus, we require $O(n\log{m})$ at the end of a phase
to update all the data structures. The preprocessing requires $n$ steps. 

Each rise causes the quantity $\beta_{j}$ to grow by a factor of $1+\epsilon$, starting with
$\epsilon \min_i (\frac{c_{ij}}{p_{ij}})$. 
For each sink $t_j$ such that $\beta_j>0$, there exists an $i$ such that $s_i$ ships flow to $t_j$ and  thus
$0 \leq \alpha_i = c_{ij} - p_{ij}\beta_j$. This  leads to $\beta_j \leq c_{ij}/p_{ij}
\leq \max_{ij}\frac{c_{ij}}{p_{ij}}$. There are $m$ different $\beta_j$. Therefore
there can be no more than $O(m\log_{1+\epsilon} U)$ total changes in  $\beta_j$ for all $j$.  
Combining this with the fact that every rise in $\beta_j$ is charged  $O(n^2+n\log{m})$ amount
of work and $\log_{1+\epsilon}U = O(\epsilon^{-1}\log U)$, we have the result.
\end{proof}

\begin{lemma}
The modified algorithm terminates with a $(1-\epsilon)$ approximate solution.
\end{lemma}

\begin{proof}
The proof follows along similar lines as the proof of the algorithm in section \ref{sec_basic}


Firstly, we observe that while the algorithm \ref{alg_auc} {\bf Auction} 
does not require any particular order 
in which the surplus sources are picked, the sequence of choices of 
the sources from where the surplus is reduced,  as imposed by 
the modifications, still ensures  primal and dual constraints.

In the case of cycles, the modified algorithm 
essentially simulates multiple steps of the algorithm
\ref{alg_auc} {\bf Auction}. The end result is exactly the same if each step was 
performed individually
and repeatedly. The amount of flow shipped out of a source at each step, however, may be 
{\em smaller} than what Algorithm \ref{alg_auc} {\bf Auction} would have shipped out. 
However, the primal and dual constraints are still ensured.

The modified algorithm does not make any additional changes to the variables $\alpha_i$,
$\beta_j$ or $f_{ij}$. The only affected variables are $y_{ij}$ in the 
preprocessing procedure. This 
change does not affect the claim that $\beta_j \leq y_{ij}(1+\epsilon)$ for  all $i$ in the proof
of lemma \ref{lemma_primalCS} as $y_{ij}$ is set to $\beta_j$ itself. 
The pre-processing procedure 
leaves at least one back-edge intact; this ensures no changes in $\beta_j$ takes place as 
a result of the preprocessing. All the arguments involving $\beta_j$ are, therefore, unaffected.

Thus we can state that
\begin{itemize}

\item
The primal solution obtained by the algorithm {\em Modified Auction} is feasible.

\item
The dual solution obtained by the modified algorithm is feasible.

\end{itemize}

Due to the fact that the change in $\beta_j$ and consequent changes in other variables are
locally ensured by the same procedure as  in the previous section, the following holds:
\begin{itemize}

\item
The complementary slackness conditions {\bf CS1}
and {\bf Cs2}
\ref{sink_cond} 
are satisfied exactly and condition  {\bf Cs3}
is satisfied approximately.

\item
At termination the difference between the value of 
primal solution $\sum_{ij} c_{ij}f_{ij}$ and the value of the 
dual solution is at most $ \epsilon \sum_{ij} c_{ij}f_{ij}$

\end{itemize}

We thus conclude with the lemma
\end{proof}

We now conclude the main result

\begin{theorem}
The Maximum Profit Budgeted Transportation Problem can be solved approximately to within
a factor $(1-\epsilon)$ in  $O(\epsilon^{-1}(n^2+ n\log{m})m\log U)$ time.
\end{theorem}
}
\section{Edge Capacities}
\label{Edge-formal}

In this section we further generalize the Budgeted Transportation Problem. In this version
each edge has an upper bound on the flow going through. The capacity of each edge
is represented by $u_{ij} \in \mathbb{Z}^+$. We call this problem the Budgeted
Transshipment (BTS) problem. It is easy to see that the budgeted transportation is a special
case where edge capacities are infinite. 

Consider the Linear Program for BTS 

\begin{eqnarray}
{\rm maximize\ \ \ \ \ }\sum_{ij} c_{ij}f_{ij}
\end{eqnarray}

\ \  subject to:
\begin{eqnarray}
\sum_{j} f_{ij} &\leq& a_i \ \ \ \forall i\\
\sum_{i} p_{ij}f_{ij} &\leq& b_j \ \ \ \forall j\\
f_{ij} &\leq& u_{ij} \ \ \ \forall ij\\
f_{ij} &\geq& 0 \ \ \ \forall ij
\end{eqnarray}

The dual of the above program is 

\begin{eqnarray}
{\rm minimize\ \ \ \ \ }\sum_{i} \alpha_i a_i +  \sum_j \beta_j b_j + \sum_{ij} u_{ij}\gamma_{ij}
\end{eqnarray}

\ \  subject to:
\begin{eqnarray}
\alpha_i \geq c_{ij} - p_{ij}\beta_j - \gamma_{ij}\ \ \ \ \forall ij\\
\alpha_i, \beta_j, \gamma_{ij} \geq 0 \ \ \ \ \ \ \ \ \ \ \ \  \  \forall ij 
\end{eqnarray}

And the complementary slackness conditions are

\begin{eqnarray}
f_{ij}(c_{ij} - \alpha_{i} - p_{ij}\beta_j - \gamma_{ij}) &=& 0 
\label{flow_cs_cond} \ \ \ \ \forall ij\\
\alpha_i (a_{i} - \sum_j f_{ij}) &=& 0 \ \ \ \ \forall i
\label{source_cs_cond}\\
\beta_j(b_j - \sum_i p_{ij}f_{ij}) &=& 0 \ \ \ \ \forall j
\label{sink_cs_cond}\\
\gamma_{ij}(u_{ij} - f_{ij}) &=& 0 \ \ \ \ \forall ij
\label{edge_cs_cond}
\end{eqnarray}

\section{Approximation Algorithm for BTS}
\label{sec_extend}

We now present a detailed  approximation algorithm for the BTS problem.

We first (re)define some terms

\paragraph*{Edge dual variable $\gamma_{ij}$: }
The dual variable associated with each  edge. If $ij$ is saturated, 
$\gamma_{ij}$ is 
set as,  $\gamma_{ij} =\min_{j'} \{ (c_{ij} - p_{ij}\beta_{j})-(c_{ij'} - p_{ij'}\beta_{j'}) \}$ 
over all choices of edges  $ij'$.
It is set to 0  if edge $ij$ is unsaturated.

\paragraph*{Preferred Edge: } 
For each source $i$ an edge $ij$, termed $Pr_i $, such that $ij$ is 
unsaturated and $c_{ij}-p_{ij}\beta_j$ is maximized. Note that in the case 
when more than one edge meet the criteria, one of them is chosen as the preferred
edge. The preferred edges are directed forward (from $i$ to $j$) in the derived graph. 
Therefore, every forward edge is unsaturated. 
\paragraph*{Back Edge: }
An edge $ij$ such that $f_{ij} > 0$ and $y_{ij} = \frac{\beta_j}{1+\epsilon}$.\\

As in the previous algorithm, a derived graph $H$ is used by the algorithm.
The derived graph has vertices as the original vertex set and edges as back edges
and preferred edges. The edges are directed
with the preferred edge from source $i$, $Pr_i= ij$, being directed from source
$i$ to a sink $j$, with capacity given by $u_{ij}-f_{ij}$, 
and a back edge directed from a sink $j$ to a source $i'$ with capacity $f_{i'j}$.

The algorithm differs from the one in section \ref{sec_auc} in  the
use of the variables $\gamma_{ij}$. 
To ensure complementary slackness,
this variable becomes non-zero only if 
edge $ij$ is saturated. 
This variable is not explicitly maintained but its value can be deduced from
$\alpha_i$ and $\beta_j$.
Its value indicates the extra cost incurred in
using another edge from source $i$.
To ensure complementary slackness conditions,
the value of the variable $\gamma_{ij}$ implicitly reduces when $\beta_j$
rises and when 
the reduced value becomes zero, the edge $ij$ can be designated as a back-edge
such that flow can be pushed back on 
the edge $ij$, leaving it unsaturated.

The following additions/modifications are made to the algorithm:

\begin{itemize}

\item
During the initialization, $\gamma_{ij}$ is set to zero.

\item
$\beta_j$ update: This update occurs when  $y_{ij}=\beta_j$,
for all unsaturated edges with positive flow. 
See procedure \ref{proc_update}
{\bf update}.

\item
Preprocessing : Before the removal of 2-cycles, for each saturated edge
$ij$, if $\gamma_{ij} > 0 $, $ij$ cannot be  a back edge. In case 
$\gamma_{ij} \leq  0$, it becomes a back edge. (See procedure \ref{proc_update} 
{\bf update}).

\item
Path/Cycle discovery  and pushing 
of flow: In the BTS problem, the forward edges have a capacity 
as well. While pushing the flow in a path or in a cycle, the bottleneck edges 
could be forward edges in addition to back-edges. 
(See \ref{proc_findpath} {\bf findpath}, \ref{proc_pushpath} {\bf
pushFlowPath} and \ref{proc_pushcycle} {\bf pushFlowCycle}).

\end{itemize}

The algorithm with the above modification is described as Algorithm \ref{alg_modauc} 
{\bf Modified Auction},
along with the associated procedures. 

The algorithm initializes the
primal and dual solution using procedure \ref{proc_modinit} {\bf Initialize}. 
Preprocessing is done in procedure \ref{proc_preprocess} {\bf Preprocess}
to choose the 
preferred edge and eliminate 2-cycles.

It then repeats the following steps:  find a path in the derived 
graph using \ref{proc_findpath} {\bf findpath}; depending on whether the path found
is a simple 
path or contains a cycle, the algorithm pushes flow  using \ref{proc_pushpath} {\bf pushFlowPath} or 
\ref{proc_pushcycle} {\bf pushFlowCycle}. 

Procedure \ref{proc_pushpath} {\bf pushFlowPath} starts with the first source and
moves flow along the path without accumulating it at any of the intermediate sources, unless,
one of the edges on the path is saturated. In this case, the source preceding the edge
accumulates some surplus.

Procedure \ref{proc_pushcycle} {\bf pushFlowCycle} is more involved. It uses 
Procedure \ref{proc_pushpath} {\bf pushFlowPath} to transfer the surplus from the initial source
to the first source that the algorithm encounters in the cyclic part of the path 
(See figure \ref{fig_cyclepush2}).
It then computes the number of pushes of flow through the cycles that are required
to (i) identify bottleneck edges or (ii) push the surplus to zero
(See figure \ref{fig_cyclepush3}).
Note that an extra 
last push may be required 
to transfer the flow from the first source in the cycle to the one just before the 
bottleneck edge as illustrated in figure \ref{fig_cyclepush4}.

When required, $\beta_j$ is updated using Procedure \ref{proc_update} {\bf update} and 
preprocessing is done for the next iteration. It also checks if a  saturated edge
$ij$  which has a small value of $\gamma_{ij}$  prior to the
increase in $\beta_j$ has the condition
that $ \gamma_{ij} = c_{ij} - p_{ij}\beta_j - \alpha_i <0$,
i.e. the edge is on the verge (i.e. $\gamma_{ij}$ is close to $0$) 
of becoming unsaturated and 
allows for the edge to become unsaturated by designating the edge
as a back-edge. 

Each source maintains the effective profit $c_{ij} -p_{ij}\beta_j$ for all the 
unsaturated edges in a heap. Whenever an edge is saturated, it is removed 
from the heap. When $\gamma_{ij}$ is set to 0, it is re-inserted into the
heap. Each of these operations requires $O(\log{m})$ time. These operations are performed
when $\beta_j$ changes, for any $j$. 

The algorithms terminates when all the complementary slackness conditions are approximately 
met. The algorithm can be analyzed for correctness as well as complexity using the same 
framework as section \ref{section_modified}.

The correctness is shown by proving that both primal and dual solutions are feasible throughout
the execution and complementary slackness conditions are approximately met at termination.

The complexity is analyzed via a charging argument. Each iteration of the algorithm is
charged to one of the events: clearing of surplus at a source, saturation of a sink or 
saturation of an edge. All such charges are then accounted for in the
final analysis.

The details of the algorithm and the proofs are presented in the following section.


\subsection{Detailed Algorithm}
\label{sec_details}

 The modified  algorithm for BTS is presented below in detail, followed by 
proof of correctness and complexity analysis. Since BTP is a special case
of BTS, the algorithm is applicable to both the problems.

\begin{algorithm}
\caption{ {\bf Modified Auction}}
\begin{algorithmic}[1]

\vspace*{.1in}
\STATE{ {\bf Initialize}}
\STATE{{\bf Preprocess}}
\WHILE{ there exists a source $i$ such that $\sum_j f_{ij} < a_i $ {\bf and} $\alpha_i > 0$}
\label{alg_modauc:line1}
\STATE{ $P \leftarrow$ {\bf FindPath}($i$)}
\IF{$P$ contains a cycle $C$ which is not a 2-cycle at the end of $P$}
\STATE{ {\bf pushFlowPath} ($P'$) where $P'$ is the portion of $P$ before $C$}
\STATE{ {\bf pushFlowCycle }($C$)}
\ELSE
\IF{$P$ does not end with a 2-cycle}
\STATE{ {\bf pushFlowPath}($P$)}
\ELSE
\STATE{//$P$ ends at a 2-cycle $i_{l}j_l$ }
\STATE{ {\bf pushFlowPath}($P'$) where $P'$ excludes the last edge $i_lj_l$}
\STATE{ Eliminate back edge $j_li_l$ and let
$y_{i_lj_l} \leftarrow \beta_{j_l}$}

\ENDIF
\ENDIF
\STATE{{\bf Update $\beta_j$}}
\ENDWHILE

\end{algorithmic}
\label{alg_modauc}
\end{algorithm}

\begin{algorithm}
\caption{{\bf Initialize}}
\begin{algorithmic}[1]
\vspace*{.1in}

\STATE{$f_{ij} \leftarrow 0$ for all $ij$}
\label{proc_modinit:line1}
\STATE{$\beta_j \leftarrow 0$ for all $j$}
\label{proc_modinit:line2}

\STATE{$\gamma_{ij} \leftarrow 0$ for all $ij$}
\label{proc_modinit:line3}
\STATE{$\alpha_i \leftarrow \max_j c_{ij}$ for all $i$}

\end{algorithmic} 

\label{proc_modinit}
\end{algorithm}

\begin{algorithm}
\caption{\bf Preprocess} 
\begin{algorithmic}[1]
\vspace*{.1in}

\FORALL {$i$ such that $\sum_j f_{ij} < a_i$ {\bf and } $\alpha_i > 0$}
\STATE{Find an unsaturated $ij$ such that $c_{ij} - p_{ij}\beta_j$ is maximized}
\STATE{Make $ij$ the preferred edge for source $i$}
\STATE{$\alpha_i \leftarrow \max(0,c_{ij} - p_{ij}\beta_j)$} 
\label{proc_preprocess:line1}
\STATE{//Remove all 2-cycles unless it involves the only back edge}
\IF{$\exists i'$ such that $y_{i'j} < \beta_j$}
\STATE {$y_{ij} \leftarrow \beta_j$} 
\ENDIF
\ENDFOR

\end{algorithmic}
\label{proc_preprocess}
\end{algorithm}

\begin{algorithm}
\caption{{\bf $\beta_j$ update} } 
\begin{algorithmic}[1]
\vspace*{.1in}

\FORALL {$j$}
\STATE{ //Update $\beta_j$ if all assigned sources are shipping at the higher value}
\IF{ $\forall i$ such that edge $ij$ is unsaturated and  $y_{ij} = \beta_j$} 
\STATE {$\beta_j \leftarrow \beta_j(1+\epsilon)$ }
\label{proc_update:line1}
\FORALL {$i$ such that edge $ij$ is saturated}
\IF{$c_{ij} - p_{ij}\beta_j - \alpha_i < 0$}
\label{proc_update:line3}
\STATE{//There is an edge with $\gamma_{ij} $ close to zero, i.e. there is another edge with a larger effective profit}
\STATE{Designate $ij$ as a back-edge}
\label{proc_update:line2}
\STATE{//Raise $\beta_j$ to $\epsilon\min_i \frac{c_{ij}}{p_{ij}}$ if it is 0 and sink 
$j$ is saturated}
\IF{$\sum_i f_{ij} p_{ij} = b_j$ {\bf and} $\beta_j = 0$}
\STATE{$\beta_j \leftarrow \epsilon\min_i \frac{c_{ij}}{p_{ij}}$}
\label{proc_update:line10}
\ENDIF

\ENDIF 
\ENDFOR
\ENDIF
\STATE{{\bf Preprocess}}
\ENDFOR

\end{algorithmic}
\label{proc_update}
\end{algorithm}








\begin{algorithm}
\caption{{\bf pushFlowPath}($P$)}
\begin{algorithmic}[1]
\vspace*{.1in}
\ENSURE{$P = \{i_1,j_1,i_2,j_2...i_l,j_l\}$ or $P= \{ i_1 , j_1 \ldots i_l , j_l , i_{l+1} \}$}

\STATE{ //$\phi$ is the amount that is transferred. Initially, it is the surplus at the first source
in the path}
\STATE{$\phi \leftarrow s_i$}
\STATE{ $r=l$}
\IF{$P$ ends at $j_{l}$}
\STATE $r= l-1$
\ENDIF
\FOR{$k \leftarrow 1$ to $r$} 
\STATE//The transferable amount is subject to the forward and backward edge capacity constraints
\STATE{$\phi \leftarrow \min(\phi,u_{i_kj_k}-f_{i_kj_k},f_{i_{k+1}j_{k}}\frac{p_{i_{k+1}j_{k}}}{p_{i_{k}j_{k}}})$}
\label{proc_pushpath:line1}
\STATE {$f_{i_kj_k} \leftarrow f_{i_kj_k} + \phi$}
\label{proc_pushpath:line4}
\STATE {$y_{i_kj_k} \leftarrow \beta_{j_k}$}
\STATE{//$\phi$ is multiplied by the fraction of two prices to reflect the proportional\\
//flow reduction on back edge $i_{k+1}j_k$}
\STATE {$\phi_\leftarrow \phi \frac{p_{i_{k}j_{k}}}{p_{i_{k+1}j_{k}}} $}
\label{proc_pushpath:line2}
\STATE {$f_{i_{k+1}j_{k}} \leftarrow f_{i_{k+1}j_{k}} - \phi$}
\ENDFOR
\IF{$r=l-1$}
\STATE{$\phi \leftarrow \min(\phi,u_{i_lj_l}-f_{i_lj_l}$ })
\STATE {$f_{i_lj_l} \leftarrow f_{i_lj_l} + \phi$}
\ENDIF
\label{proc_pushpath:line3}


\end{algorithmic}
\label{proc_pushpath}
\end{algorithm}

\begin{algorithm}
\caption{{\bf pushFlowCycle}($C$)}
\begin{algorithmic}[1]
\vspace*{.1in}

\REQUIRE{$C = \{i_1,j_1,i_2,j_2,...,i_k,j_k,i_{k+1}=i_1\}$ is a cycle in the derived graph}

\STATE{//Compute $\rho_\odot$}
\STATE{$\rho_\odot \leftarrow p_{i_kj_k}/p_{i_1j_k}\prod_{z=1}^{k-1} \frac{p_{i_zj_z}}{p_{i_{z+1}j_{z}}}$ }
\STATE{//Compute limiting number of revolutions for each edge}
\FOR{$l \leftarrow 1,k$}
\IF {$l>1$}
\STATE{$\rho^{'c}_l \leftarrow \prod_{z=1}^{l-1} \frac{p_{i_zj_z}}{p_{i_{z+1}j_{z}}}$}
\STATE{$\rho^c_l \leftarrow \prod_{z=1}^{l} \frac{p_{i_zj_z}}{p_{i_{z+1}j_{z}}}$}
\ELSE
\STATE{$\rho^{'c}_1 \leftarrow 1$}
\STATE{$\rho^c_1 \leftarrow  p_{i_1j_1}/p_{i_2j_1}$}
\ENDIF
\STATE{$ R'_l \leftarrow {\rm argmax}_r (\sum_{z=0}^{r} s_{i_1} \rho^{'c}_l (\rho_\odot)^z \leq u_{ij} - f_{i_lj_l})$}
\STATE{$ R_l \leftarrow {\rm argmax}_r (\sum_{z=0}^{r} s_{i_1} \rho^c_l (\rho_\odot)^z \leq  f_{i_{l+1}j})$}
\ENDFOR
\STATE{$R_{min} \leftarrow \min_l (R_l,R'_l)$}   
\STATE{// $R_{min}$ can be $\infty$} 
\STATE{$l^1_{min} \leftarrow {\rm argmin}_l (R'_l)$}
\STATE{$l^2_{min} \leftarrow {\rm argmin}_l (R_l)$}
\STATE{$l_{min} \leftarrow \min(l^1_{min},l^2_{min})$}

\vspace*{.1in}
\STATE{//change the flow on each edge to simulate pushing of multiple revolutions} 
\FOR{$z \leftarrow 1$ to $k$}
\IF{$R_{min} \neq \infty$}
\STATE{$f_{i_zj_z} \leftarrow f_{i_zj_z} +  
\sum_{j=0}^{R_{min}} s_{i_1} \rho^{'c}_{z} (\rho_\odot)^{j}$}
\ELSE
\STATE{//Compute the convergence of the infinite series }
\STATE{$f_{i_zj_z} \leftarrow f_{i_zj_z} +  
\sum_{j=0}^{\infty} s_{i_1} \rho^{'c}_{z} (\rho_\odot)^{j}$}
\ENDIF
\STATE{$y_{i_zj_z} \leftarrow \beta_{j_z}$}
\STATE{$f_{i_{z+1}j_z} \leftarrow f_{i_{z+1},j_z} - 
\sum_{j=0}^{R_{min}} s_{i_1} \rho^{c}z (\rho_\odot)^{j}$}
\ENDFOR

\vspace*{.1in}
\IF{ $R_{min} $ is finite}
\STATE{//Send one more cycle of flow ; this will saturate some edge in one revolution}
\STATE{{\bf pushFlowPath}($i_1 \ldots i_k j_k i_1$)}

\vspace*{.1in}
\STATE{Let $P \leftarrow \{i_1,j_1,...i_{l-1},j_{l-1}, i_l \}$ where the edge $i_{l}j_l$ was 
saturated in previous step}
\STATE{{\bf pushFlowPath}($P$)}
\ENDIF

\end{algorithmic}
\label{proc_pushcycle}
\end{algorithm}

\begin{algorithm}
\caption{{\bf findPath}($i$)}
\begin{algorithmic}[1]
\vspace*{.1in}

\REPEAT
\IF {$\alpha_i > 0$ {\bf or} source $i$ already belongs to $P$} 
\STATE {return $P$}
\ENDIF
\STATE{Add source $i$ to $P$}
\STATE{Let $j$ be the sink such that $ij$ is $Preferred(i)$}
\label{line_xyz}
\IF {$\beta_j = 0$ {\bf or} sink $j$ already belongs to $P$}
\STATE { return $P$}
\ENDIF
\STATE{Add sink $j$ to $P$}
\STATE{Let $i'$ be a source such that $i'j$ is a back edge}
\STATE{$i \leftarrow i'$} 
\UNTIL{\bf false}

\end{algorithmic}
\label{proc_findpath}
\end{algorithm}



\begin{figure}[h]
\centerline{\epsfysize=190pt\epsfbox{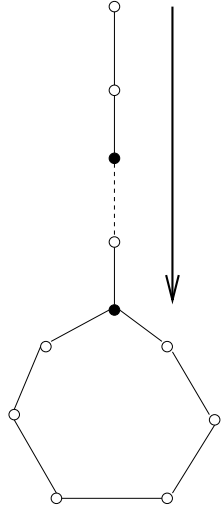}}
\caption{ First we bring the surplus to the first source in the cycle, possibly
          saturating an edge en route}
\label{fig_cyclepush2}
\end{figure}

\begin{figure}[h]
\centerline{\epsfysize=190pt\epsfbox{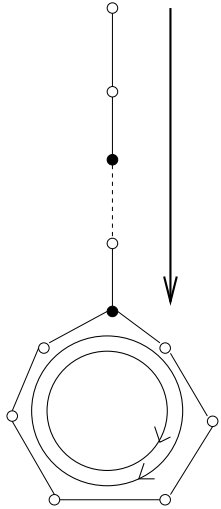}}
\caption{Send the flow around the cycle as many times as possible without
saturating any edge}
\label{fig_cyclepush3}
\end{figure}

\begin{figure}[h]
\centerline{\epsfysize=190pt\epsfbox{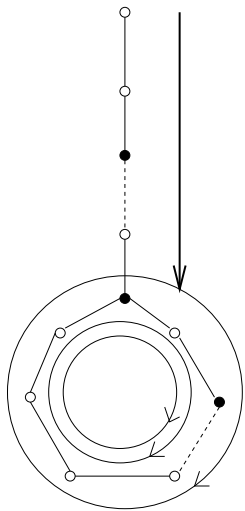}}
\caption{Send it one more time possibly saturating an edge and accumulating
surplus at the source before it}
\label{fig_cyclepush4}
\end{figure}



\subsection{Feasibility of the Solutions}

\begin{lemma}
The primal and dual solutions maintained by algorithm \ref{alg_modauc} {\bf Modified Auction}
are feasible throughout its execution.
\label{lemma_BTSfeasible}
\end{lemma}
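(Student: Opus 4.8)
The plan is to treat feasibility as an invariant and verify that it is established by \textbf{Initialize}/\textbf{Preprocess} and preserved by every flow push (\textbf{pushFlowPath}, \textbf{pushFlowCycle}) and every dual update (\textbf{update}). I would split the statement into primal feasibility (the four constraints $f_{ij}\ge 0$, $\sum_j f_{ij}\le a_i$, $\sum_i p_{ij}f_{ij}\le b_j$, $f_{ij}\le u_{ij}$) and dual feasibility (nonnegativity of $\alpha,\beta,\gamma$ together with $\alpha_i\ge c_{ij}-p_{ij}\beta_j-\gamma_{ij}$), reusing the arguments of Lemmas \ref{primal_feasible1} and \ref{primal_feasible2} wherever the edge-capacity extension leaves them intact.

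For primal feasibility, nonnegativity and the supply bound are immediate: the amount leaving the starting source is capped by its surplus $s_i=a_i-\sum_j f_{ij}$, and every reduction on a back edge is capped by the flow currently on it, which is precisely the role of the $\min$ in line \ref{proc_pushpath:line1} of \textbf{pushFlowPath} and of the limiting-revolution counts $R_l$ in \textbf{pushFlowCycle}. The budget bound is preserved because at each intermediate sink $j_k$ a forward edge is increased by $\phi$ while the incident back edge is decreased by $\phi\,p_{i_kj_k}/p_{i_{k+1}j_k}$, so the weighted inflow $\sum_i p_{ij}f_{ij}$ changes by $p_{i_kj_k}\phi-p_{i_{k+1}j_k}\cdot\phi\,p_{i_kj_k}/p_{i_{k+1}j_k}=0$; when a sink is instead filled the increase is at most $d_j$, exactly as in Lemma \ref{primal_feasible1}. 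The genuinely new constraint is $f_{ij}\le u_{ij}$, and it holds because forward pushes are capped by the residual capacity $u_{i_kj_k}-f_{i_kj_k}$ (again line \ref{proc_pushpath:line1}) and, around a cycle, by the counts $R'_l$; the single extra revolution at the end of \textbf{pushFlowCycle} is itself a call to \textbf{pushFlowPath} and so respects the same caps.

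For dual feasibility I would first note $\alpha_i\ge 0$ (set via $\max(0,\cdot)$ in line \ref{proc_preprocess:line1}) and $\beta_j\ge 0$ with $\beta_j$ only rising (lines \ref{proc_update:line1} and \ref{proc_update:line10}), and then argue $\alpha_i\ge c_{ij}-p_{ij}\beta_j-\gamma_{ij}$ edge by edge according to saturation status. For an unsaturated edge $\gamma_{ij}=0$, so I need $\alpha_i\ge c_{ij}-p_{ij}\beta_j$; this holds because whenever $\alpha_i$ is recomputed in \textbf{Preprocess} it is set to the largest effective profit over unsaturated edges (or $0$), and since $\beta$ never decreases the right-hand side cannot later rise above that value. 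For a saturated edge I take the implicit value $\gamma_{ij}=c_{ij}-p_{ij}\beta_j-\alpha_i$, which makes the constraint tight, so feasibility reduces to $\gamma_{ij}\ge 0$. This is exactly what \textbf{update} enforces: an edge is born saturated from a forward (preferred) edge of a source with $\alpha_i>0$, hence with $\gamma_{ij}=c_{ij}-p_{ij}\beta_j-\alpha_i=0$, and as $\beta_j$ rises the test $c_{ij}-p_{ij}\beta_j-\alpha_i<0$ in line \ref{proc_update:line3} converts the edge into a back edge (resetting $\gamma_{ij}$ to $0$) before it can go negative.

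I expect the dual argument to be the main obstacle. The delicate point is that \textbf{Preprocess} recomputes $\alpha_i$ only for sources that still carry surplus with $\alpha_i>0$, so I must show the constraint survives for edges incident to sources whose $\alpha_i$ is \emph{stale}; here the monotonicity of $\beta$ does the work, since a stale $\alpha_i$ was valid for smaller $\beta$ values and the unsaturated right-hand side only shrinks afterwards. A second subtlety is the handover of an edge from saturated to unsaturated: I must confirm that at the instant it is designated a back edge the inequality holds with the new value $\gamma_{ij}=0$, which follows because the designation is triggered precisely by $c_{ij}-p_{ij}\beta_j<\alpha_i$. The primal pieces, by contrast, are routine once the $\min$ and revolution caps are read off the pseudocode.
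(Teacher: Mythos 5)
Your proposal is correct and follows essentially the same route as the paper: primal feasibility is checked constraint-by-constraint against the $\min$ caps in \textbf{pushFlowPath} and the revolution counts in \textbf{pushFlowCycle} (with the same telescoping of $p_{i_kj_k}\phi$ against $p_{i_{k+1}j_k}\cdot\phi\,p_{i_kj_k}/p_{i_{k+1}j_k}$ at each sink), and dual feasibility rests on the monotonicity of $\beta_j$, the recomputation of $\alpha_i$ as the maximum effective profit over unsaturated edges, and the implicit clipped definition of $\gamma_{ij}$. Your reorganization of the dual case analysis by edge saturation status rather than by which variable changed is only a cosmetic difference, and your flagged subtleties (stale $\alpha_i$, the saturated-to-unsaturated handover) are resolved exactly as you indicate.
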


\begin{proof}
Consider the initial primal solution $f_{ij} = 0$ for all $f_{ij}$ (line \ref{proc_modinit:line1}
in Procedure \ref{proc_modinit} {\bf Initialize}). This is clearly a feasible primal solution.

Thereafter, the flow is increased or decreased in Procedures \ref{proc_pushpath} 
{\bf pushFlowPath}  and
\ref{proc_pushcycle} {\bf pushFlowCycle}. Let us consider each possible change.

\begin{enumerate}
\item
Increase/decrease of flow along a forward edge in line \ref{proc_pushpath:line1} of
Procedure \ref{proc_pushpath} {\bf pushFlowPath}. 

\begin{enumerate}
\item
The source constraint is satisfied as the increase in flow on edge $i_kj_k$ is at 
most the surplus generated at source $i_k$ in the previous iteration.

\item
Also, the increase is at most the available edge capacity $u_{i_kj_k}$, therefore
edge capacity constraint is satisfied.

\item
The total increase in the price of incoming flow on sink $j$ is $p_{i_kj_k}\phi$
(line \ref{proc_pushpath:line4}) and the total decrease is 
$p_{i_{k+1}{j}}\frac{p_{i_{k}j_{k}}}{p_{i_{k+1}j}}\phi$. Therefore, there
is no net change and the budget constraint remains satisfied and tight. 

\item
The decrease on edge $i_kj_k$ is $\phi \frac{p_{i_{k}j_{k}}}{p_{i_{k+1}j_{k}}}$ 
(line \ref{proc_pushpath:line2}, \ref{proc_pushpath:line3}) where $\phi$ is at most
$f_{i_{k+1}j_{k}}\frac{p_{i_{k+1}j_{k}}}{p_{i_{k}j_{k}}}$. Therefore the decrease
is at most $f_{i_{k+1}j_{k}}$. Thus the positivity constraint on edge flows
is satisfied.
\end{enumerate}

\item
Increase/Decrease in flow in Procedure \ref{proc_pushcycle} {\bf pushFlowCycle}. Everytime the
flow is pushed, the outgoing flow equals the incoming flow on every source
and a proportional amount out of every sink. Therefore,
not only are the source constraints satisfied, but also the
budget constraints on the sink.  The number of times the flow is pushed in the 
cycle is the minimum of  the limiting number of revolutions. Therefore the capacity 
constraints are satisfied. 

The last step uses Procedure \ref{proc_pushpath} {\bf pushFlowPath}, which satisfies the
constraints as explained above.

\end{enumerate}

Now consider the dual solution and its feasibility. For all $i$, $\alpha_i$ is 
initialized to $\max_j c_{ij}$ which is clearly larger than 
$\max c_{ij}-p_{ij}\beta_j-\gamma_{ij}$ as $\beta_j$ and $\gamma_{ij}$ are
zero to begin with (lines \ref{proc_modinit:line2} and \ref{proc_modinit:line3} in
Procedure \ref{proc_modinit} {\bf Initialize}). 

All the instances in the algorithm where these variables change are enumerated below. 
Assuming that the solution is feasible till these changes are made, we 
show the solution is still feasible after the change. 

\begin{enumerate}

\item
The update of $\beta_j$ in Procedure \ref{proc_update} $\beta_j$ {\bf Update}
at lines \ref{proc_update:line1} and \ref{proc_update:line10} increases it's value. 
Since any increase in  $\beta_j$ implies a decrease in the value 
$c_{ij} - p_{ij}\beta_j - \gamma_{ij}$, $\alpha_i$ remains larger 
than $c_{ij} - p_{ij}\beta_j - \gamma_{ij}$.

\item
Change in $\gamma_{ij}$ : $\gamma_{ij}$ is assigned a value 
equal to $\max (0,\alpha_{ij} - c_{ij'}- p_{ij}\beta_j')$. Therefore,
for every $ij$, $\alpha_{ij} \geq c_{ij} - p_{ij}\beta_j - \gamma_{ij}$ 
by construction.

\item
Note that any change in $\alpha_i$ only occurs when there is a change
in some $\beta_j$, it does not change independently.  When it changes, it is set 
to $\max_{ij} (0,c_{ij} - p_{ij}\beta_j)$ (see line 
\ref{proc_preprocess:line1} in algorithm \ref{proc_preprocess} {\bf preprocess}). It is
therefore, larger than $c_{ij} - p_{ij}\beta_j - \gamma_{ij}$ for all $ij$ as 
$\gamma_{ij} \geq 0$.

\end{enumerate}
\end{proof}

\subsection{Complementary Slackness} We prove the following lemma about
the satisfaction of complementary slackness conditions.

\begin{lemma}
The algorithm terminates with the following conditions satisfied
\begin{eqnarray}
\forall \alpha_i > 0, \ \ \ a_i - \sum_j f_{ij} & = & 0
\label{source_cs_condedge}\\
\forall \beta_j > 0, \ \ \ b_j - \sum_i p_{ij}f_{ij} & = & 0
\label{sink_cs_condedge}\\
\forall \gamma_{ij} > 0,  \ \ \ u_{ij} - f_{ij}  & = & 0
\label{edge_cs_condedge}\\
\forall f_{ij} > 0 , \ \ \ \ |c_{ij} - \alpha_i - p_{ij}\beta_j - \gamma_{ij}| & \leq & \epsilon c_{ij} 
\label{approx_cs_condedge}
\end{eqnarray}
\label{lemma_BTSapprox}
\end{lemma}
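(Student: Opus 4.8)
The plan is to verify the four conditions separately, disposing of the three exact ones (\ref{source_cs_condedge})--(\ref{edge_cs_condedge}) through the termination condition and structural invariants, and reserving the bulk of the effort for the approximate condition (\ref{approx_cs_condedge}).

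For (\ref{source_cs_condedge}) I would appeal directly to the guard of the \textbf{while} loop in Algorithm \ref{alg_modauc} \textbf{Modified Auction}: the loop runs precisely while some source has both $\alpha_i>0$ and $\sum_j f_{ij}<a_i$, so on termination every source with $\alpha_i>0$ satisfies $a_i-\sum_j f_{ij}=0$. For (\ref{sink_cs_condedge}) I would establish the invariant that $\beta_j>0$ implies sink $j$ is saturated. As in Lemma \ref{beta-change}, $\beta_j$ starts at $0$ and is raised in Procedure \ref{proc_update} only after the sink is saturated (the line setting $\beta_j\leftarrow\epsilon\min_i c_{ij}/p_{ij}$ is explicitly guarded by $\sum_i f_{ij}p_{ij}=b_j$); that saturation persists follows from the accounting in the proof of Lemma \ref{lemma_BTSfeasible}, since each push in Procedure \ref{proc_pushpath} leaves $\sum_i p_{ij}f_{ij}$ unchanged at every intermediate sink of a path and every sink of a cycle, and can only increase it at a terminal sink. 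For (\ref{edge_cs_condedge}) the analogous invariant is that $\gamma_{ij}>0$ forces $f_{ij}=u_{ij}$: the algorithm keeps $\gamma_{ij}=c_{ij}-p_{ij}\beta_j-\alpha_i$ on saturated edges and $0$ otherwise, and Procedure \ref{proc_update} re-designates an edge as a back edge the instant this quantity would turn negative. A saturated edge with $\gamma_{ij}>0$ is therefore neither a preferred edge (those are unsaturated by definition) nor a back edge, so it lies outside the derived graph $H$ and its flow is frozen at $u_{ij}$ until $\beta_j$ rises enough to zero out $\gamma_{ij}$.

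The heart of the argument is (\ref{approx_cs_condedge}), which generalizes Lemma \ref{lemma_primalCS} by the added edge dual $\gamma_{ij}$. I would track a single edge $ij$ from the moment positive flow is pushed onto it (where it is a preferred, hence unsaturated, edge with $y_{ij}=\beta_j$, $\gamma_{ij}=0$, and $\alpha_i=\max(0,c_{ij}-p_{ij}\beta_j)$, so the residual $c_{ij}-\alpha_i-p_{ij}\beta_j-\gamma_{ij}$ is exactly $0$) until the flow is pushed back to zero, and split into cases. If $ij$ is saturated with $\gamma_{ij}>0$, then by construction $\gamma_{ij}=c_{ij}-p_{ij}\beta_j-\alpha_i$ and the residual vanishes identically. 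If $ij$ is unsaturated (so $\gamma_{ij}=0$), the residual is $c_{ij}-\alpha_i-p_{ij}\beta_j$ and the estimate is exactly that of Lemma \ref{lemma_primalCS}: while $f_{ij}>0$ the valuation $y_{ij}$ is either $\beta_j$ or the companion value $\beta_j/(1+\epsilon)$, and an edge at the companion value blocks any further rise of $\beta_j$. Writing $\bar\beta_j$ for the valuation at push time, so $\beta_j=(1+\epsilon)\bar\beta_j$ after one rise, monotonicity of $\alpha_i$ together with $\alpha_i\geq c_{ij}-p_{ij}\beta_j$ (valid since $ij$ is unsaturated and thus re-considered whenever $\alpha_i$ is recomputed in Procedure \ref{proc_preprocess}) pin the residual into
\[
-\epsilon\,p_{ij}\bar\beta_j \;\leq\; c_{ij}-\alpha_i-p_{ij}\beta_j \;\leq\; 0,
\]
and $p_{ij}\bar\beta_j=c_{ij}-\alpha_i^{\mathrm{push}}\leq c_{ij}$ yields $|c_{ij}-\alpha_i-p_{ij}\beta_j|\leq\epsilon c_{ij}$.

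The subtle case, which I expect to be the main obstacle, is the transition where a saturated edge with $\gamma_{ij}>0$ is converted to a back edge: at that instant $\gamma_{ij}$ is reset to $0$ while $c_{ij}-p_{ij}\beta_j-\alpha_i$ has just crossed from nonnegative to negative under a single $(1+\epsilon)$-scaling of $\beta_j$. I would bound the overshoot by observing that before the rise $c_{ij}-p_{ij}\bar\beta_j-\alpha_i\geq 0$, which already forces $p_{ij}\bar\beta_j\leq c_{ij}$, so afterward $c_{ij}-\alpha_i-p_{ij}\beta_j \geq -\epsilon p_{ij}\bar\beta_j \geq -\epsilon c_{ij}$, while the conversion guard makes the residual also $\leq 0$; hence (\ref{approx_cs_condedge}) survives exactly the one step at which the $\gamma$ term is dropped. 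Combining the cases over the lifetime of each edge, with the $\alpha_i=0$ situation handled exactly as in Lemma \ref{lemma_primalCS} (the source is thereafter never revisited, so $\beta_j$ cannot rise while $f_{ij}>0$), establishes (\ref{approx_cs_condedge}) and completes the lemma.
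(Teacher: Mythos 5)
Your proposal is correct and follows essentially the same route as the paper's proof: the first three conditions are disposed of via the loop guard and the invariants that $\beta_j$ rises only after sink saturation (which persists) and that an edge with $\gamma_{ij}>0$ is neither preferred nor a back edge so its flow stays at $u_{ij}$, while the fourth is handled by the same case split on saturated versus unsaturated edges, with the critical step being the moment a saturated edge is re-designated as a back edge and $\gamma_{ij}$ drops to zero, bounded by the single $(1+\epsilon)$ rise of $\beta_j$. Your treatment is in fact slightly more explicit than the paper's in giving both sides of the absolute-value bound (the upper side via dual feasibility), but the underlying argument is the same.
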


\begin{proof} We consider each condition  

\begin{enumerate}

\item{Source slackness condition (\ref{source_cs_condedge}).}
These are the terminating 
condition for algorithm \ref{alg_modauc} (line \ref{alg_modauc:line1}). They are, 
therefore, satisfied at the end. 

\item{Sink slackness condition (\ref{sink_cs_condedge}).} 
Note that $\beta_j$ is
initialized to zero (Procedure \ref{proc_modinit}, line \ref{proc_modinit:line2}). Thereafter, $\beta_j$ is changed
only when sink $j$ is saturated (Procedure \ref{proc_update}, line \ref{proc_update:line10}). 
Also, once a sink is saturated, it stays saturated. This can be seen by observing that any
decrease of flow along any back-edge only takes place as a result of an increase in
proportional amount of flow along some other edge on this sink.

\item{Edge slackness condition (\ref{edge_cs_condedge}).} 
These are satisfied initially as $\gamma_{ij} = 0$. Thereafter,
as long as $\gamma_{ij}$ is greater than zero, the flow does not decrease on this edge as
it is never designated a back edge. Only when $\gamma_{ij}$ is set to zero 
(Procedure \ref{proc_update}, line \ref{proc_update:line3}),
is the edge allowed to become a back edge. 

\item{Flow slackness condition (\ref{approx_cs_condedge}).} 
Similar to the proof of  lemma~\ref{lemma_primalCS}, we show that the following is  satisfied.
\begin{eqnarray}
\forall f_{ij} > 0, \ \ \alpha_i \leq c_{ij} - p_{ij}\beta_j - \gamma_{ij} + \epsilon c_{ij}
\label{app_cond_mod}
\end{eqnarray}

Consider unsaturated edges. Whenever a flow is increased along an edge
$ij$ it is a preferred edge. In Procedure \ref{proc_preprocess}, line \ref{proc_preprocess:line1},
whenever an edge is chosen as a preferred edge, $\alpha_i$ is set to $c_{ij}-p_{ij}\beta_j$ with
$\gamma_{ij}$ being zero as it is an unsaturated edge. The slackness condition is therefore, satisfied.
Subsequently, $\beta_j$ can rise by a factor of $(1+\epsilon)$ resulting in a rise of at most 
$\epsilon c_{ij}$ in effective profit, still satisfying (\ref{app_cond_mod}). Once 
$\beta_j$ has changed, $ij$ now becomes a back edge. Now, $\beta_j$ can not change any further
 unless this back edge has zero flow
(Procedure \ref{proc_update}, lines 3,4 and 8).
If as a consequence of a rise in $\beta_j$, $\alpha_i$ is set to zero, 
the inequality (\ref{app_cond_mod})
is true and  no further flow will be pushed from source $i$. 
$\beta_j$ cannot rise until all flow is pushed back to source $i$.

Next consider  when the edge is saturated.
In this case $\gamma_{ij}$ is set to $c_{ij} - p_{ij}\beta_j - \alpha_i$, thereby 
satisfying (\ref{app_cond_mod}).
However, flow may be pushed back on the saturated edge such that it
becomes unsaturated and $\gamma_{ij}$ is 
set to zero. This happens when $\beta_j$ rises. The change in the quantity 
$c_{ij}-p_{ij}\beta_j$ is no more than $\epsilon c_{ij}$; thus
 $|\gamma_{ij}| \leq \epsilon c_{ij}$ just prior to being set to zero. Thus
condition (\ref{app_cond_mod}) is satisfied after the change.
Condition (\ref{approx_cs_condedge}) immediately follows from condition (\ref{app_cond_mod}).  

\end{enumerate}
Hence we conclude the above lemma.
\end{proof}

From the above two lemmas 
we conclude:
\begin{theorem}
The algorithm determines a  primal solution to the BTS problem such
that $\sum_{ij} c_{ij}f_{ij} \geq (1-\epsilon) OPT$ where $OPT$ is the optimal
solution to the BTS problem.
\end{theorem}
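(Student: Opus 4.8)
The plan is to combine weak LP duality with the (approximate) complementary slackness guaranteed by the two preceding lemmas, repeating the structure of the proof of Lemma~\ref{lemma_approx} while carrying the additional edge dual variable $\gamma_{ij}$. By Lemma~\ref{lemma_BTSfeasible} the flow $f$ produced at termination is primal feasible and the triple $(\alpha,\beta,\gamma)$ is dual feasible. Writing the primal value as $P=\sum_{ij}c_{ij}f_{ij}$ and the dual value as $D=\sum_i\alpha_i a_i+\sum_j\beta_j b_j+\sum_{ij}\gamma_{ij}u_{ij}$, weak duality for this maximization LP gives $OPT\le D$, so it suffices to prove $D\le(1+\epsilon)P$.

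To relate $D$ and $P$ I would split each of $a_i$, $b_j$, $u_{ij}$ into the part consumed by the flow and the remaining slack, exactly as in Lemma~\ref{lemma_approx}. This yields the identity
\[
D-P=\sum_i\alpha_i\Big(a_i-\sum_j f_{ij}\Big)+\sum_j\beta_j\Big(b_j-\sum_i p_{ij}f_{ij}\Big)+\sum_{ij}\gamma_{ij}\big(u_{ij}-f_{ij}\big)+\sum_{ij}f_{ij}\big(\alpha_i+p_{ij}\beta_j+\gamma_{ij}-c_{ij}\big),
\]
since the cross term $\sum_i\alpha_i\sum_j f_{ij}+\sum_j\beta_j\sum_i p_{ij}f_{ij}+\sum_{ij}\gamma_{ij}f_{ij}$ is precisely what converts $\sum_{ij}f_{ij}(\alpha_i+p_{ij}\beta_j+\gamma_{ij})$ back into the three dual contributions.

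Next I would eliminate the first three sums using Lemma~\ref{lemma_BTSapprox}: condition~(\ref{source_cs_condedge}) forces $\alpha_i(a_i-\sum_j f_{ij})=0$ for every $i$, condition~(\ref{sink_cs_condedge}) forces $\beta_j(b_j-\sum_i p_{ij}f_{ij})=0$ for every $j$, and condition~(\ref{edge_cs_condedge}) forces $\gamma_{ij}(u_{ij}-f_{ij})=0$ for every edge, so all three slack terms vanish. In the remaining sum only edges with $f_{ij}>0$ contribute, and for these the two-sided approximate condition~(\ref{approx_cs_condedge}) gives $|\alpha_i+p_{ij}\beta_j+\gamma_{ij}-c_{ij}|\le\epsilon c_{ij}$; summing yields $|D-P|\le\epsilon\sum_{ij}c_{ij}f_{ij}=\epsilon P$. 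Hence $D\le(1+\epsilon)P$, and combined with $OPT\le D$ we obtain $P\ge OPT/(1+\epsilon)\ge(1-\epsilon)OPT$, the last inequality holding because $(1-\epsilon)(1+\epsilon)\le1$.

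The argument is essentially bookkeeping, so I do not expect a serious obstacle; the points that must be checked carefully are that each of the three products vanishes term by term (this is exactly why Lemma~\ref{lemma_BTSapprox} states its slackness conditions in the conditional form ``$\alpha_i>0\Rightarrow$ constraint tight'' rather than as an approximate equality), and the sign handling in the final sum, where the two-sidedness of~(\ref{approx_cs_condedge}) is what lets me bound $D-P$ by $\epsilon P$ irrespective of the sign of each summand. The only mildly non-obvious move is the harmless conversion $1/(1+\epsilon)\ge1-\epsilon$, which closes the approximation guarantee.
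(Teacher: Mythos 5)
Your proposal is correct and follows essentially the same route as the paper: decompose the dual objective into slack terms plus the flow-weighted reduced costs, kill the three slack products using the exact complementary slackness conditions of Lemma~\ref{lemma_BTSapprox}, and bound the remaining sum by $\epsilon\sum_{ij}c_{ij}f_{ij}$ via the approximate condition~(\ref{approx_cs_condedge}). Your write-up is in fact slightly cleaner than the paper's (you state the $\gamma_{ij}(u_{ij}-f_{ij})$ term correctly where the paper's $\Delta_4$ contains a typographical slip, and you make the final $1/(1+\epsilon)\geq 1-\epsilon$ step explicit), but there is no substantive difference in approach.
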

\begin{proof}
The proof is similar to the proof  of  lemma~\ref{lemma_approx}.
The value of the dual solution is 
\begin{eqnarray}
\hspace{-0.25in}&=&\sum_i a_i\alpha_i  + \sum_j b_j\beta_j  + \sum_{ij} u_{ij}\gamma_{ij} \\
\hspace{-0.25in}&=&\sum_i (a_i - \sum_j f_{ij})\alpha_i + \sum_j (b_j - \sum_i p_{ij}f_{ij})\beta_j +
\sum_{ij} [ (u_{ij}-f_{ij})\gamma_{ij}+ 
f_{ij}(\alpha_i+p_{ij}\beta_j+\gamma_{ij})]
\end{eqnarray}

When we subtract the value of the primal solution $\sum_{ij} c_{ij}f_{ij}$, from the
above, the  difference is
\begin{eqnarray}
\sum_i (a_i - \sum_j f_{ij})\alpha_i + \sum_j (b_j - \sum_i p_{ij}f_{ij})\beta_j
-\sum_{ij} f_{ij}(c_{ij} - \alpha_i - p_{ij}\beta_j -\gamma_{ij})  
\end{eqnarray}

Thus, the total absolute difference is at most $\Delta_1 + \Delta_2 + \Delta_3 + \Delta_4$ where

\[
\Delta_1 = |\sum_{ij} f_{ij}(c_{ij} - \alpha_i - p_{ij}\beta_j -\gamma_{ij})|
\]

\[
\Delta_2 = |\sum_{i} \alpha_i(a_{i} - \sum_j f_{ij})|
\]

\[
\Delta_3 = |\sum_{j} \beta_j(b_{j} - \sum_i p_{ij}f_{ij})|
\]

\[
\Delta_4 = |\sum_{ij} \beta_j(u_{ij} - \sum_i \gamma_{ij}f_{ij})|
\]

By arguments similar to lemma~\ref{lemma_approx} and using lemma~\ref{lemma_BTSapprox}
we get that the difference is at most  $\epsilon c_{ij}f_{ij}$.
\ignore{
From lemma \ref{lemma_primalCS}, we have 

\[
\forall ij:f_{ij} > 0,  
\alpha_i \leq c_{ij} - p_{ij}\beta_j +  \epsilon c_{ij}
\]

Therefore, 

\[
\Delta_1 \leq  \epsilon \sum_{ij} c_{ij}f_{ij} 
\]

From the termination condition of the algorithm, we know that for any source
$i$ such that $a_{i} - \sum_j f_{ij} > 0$, we have $\alpha_i = 0$.

Therefore 
\[
\Delta_2 = 0
\]


For all unsaturated sinks $\beta_j = 0$. Therefore, 
\[
\Delta_3 = 0
\]
}

\end{proof}

\subsection{Complexity}
To prove the complexity we  first consider the effect of the pre-processing
step during the algorithm.
In this step the algorithm
removes every 2-cycle unless there is no other back-edge
on the sink $j$.  

The lemma below follows directly:
\begin{lemma} 
\label{lemma_oneloop}
If there is a 2-cycle between a source $i$ and a sink $j$ then there
is no back-edge incident onto  $j$ other than $ij$.
\end{lemma}
The algorithm finds path and cycles and pushes flow on edges
of the path or cycles. Each operation corresponding to a 
push of flow on an edge either enables a reduction
in surplus and changes flow to the capacity of the edge. 
If the flow on a back edge reduces to zero, this  leads to
a change on the dual variables $\beta_j$ for some sink $j$. 
We attempt to charge the various operation in finding paths or cycles and the
subsequent processing to changes in the dual variables.
\begin{lemma}
Each operation in the  procedures  {\bf pushFlowPath, pushFlowCycle, findPath}
can be charged to a rise in the value of $\beta_j, $ for some $j$
such that each rise in the value of $\beta_j$ is charged $O(n^2)$ operations .
\label{lemma_charge}
\end{lemma}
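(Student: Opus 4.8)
The plan is to bound the work of a single invocation of each procedure by $O(n)$ and then amortize these invocations against rises of the dual variables $\beta_j$. First I would observe that any path or cycle returned by \textbf{findPath} (Procedure~\ref{proc_findpath}) visits each source at most once, since the procedure halts as soon as a source already in $P$ is reached (and likewise for sinks); hence it contains $O(n)$ sources and $O(n)$ edges. Consequently a call to \textbf{pushFlowPath} touches only $O(n)$ edges, and a call to \textbf{pushFlowCycle} also costs $O(n)$: the cumulative transfer ratios $\rho^c_l$ and the limiting revolution counts $R_l, R'_l$ are computed by one traversal of the cycle, and each flow update is a closed-form geometric sum (convergent even when $R_{min}=\infty$), so no revolution is ever simulated explicitly. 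It therefore suffices to prove that the number of invocations charged to any single rise of some $\beta_j$ is $O(n)$, since each invocation performs $O(n)$ operations.

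Next I would classify each invocation by the terminating condition of \textbf{findPath}: the discovered path ends at (i) a source with $\alpha_i=0$, (ii) an unsaturated sink (where $\beta_j=0$), (iii) a $2$-cycle, or (iv) a general cycle. Case (iii) is handled directly: raising $y_{i_lj_l}$ to $\beta_{j_l}$ deletes the unique back edge incident on $j_l$ by Lemma~\ref{lemma_oneloop}, which immediately forces $\beta_{j_l}$ to rise, and I charge the $O(n)$ work of this invocation to that rise. For the remaining cases I set up a charging scheme with two kinds of targets: the \emph{disappearance} of surplus at a source, and the \emph{clearing} of a back edge, i.e.\ the event that $f_{ij}$ on a back edge is pushed to zero.

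The heart of the argument is the counting tied to each sink $j$. When a back edge $ij$ is cleared, its valuation $y_{ij}$ is raised to $\beta_j$, so by the preprocessing discipline the edge cannot become a back edge again until $\beta_j$ itself rises; and $\beta_j$ rises only once every source shipping to $j$ sits at the higher valuation, the guard of line~\ref{proc_update:line1} in Procedure~\ref{proc_update}. Since sink $j$ has at most $n$ incident back edges, at most $n$ clearing events occur at $j$ between two successive rises of $\beta_j$. I charge the $O(n)$ cost of each invocation ending in cases (i), (ii), (iv) either to a surplus-disappearance (charged once per source) or to the clearing event that ends it, allowing a matching charge for the new surplus that clearing creates at the source feeding that edge. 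This yields $O(n)$ charged events, each of cost $O(n)$, per rise of $\beta_j$, hence $O(n^2)$ operations per rise, matching the constant $3n^2+n$ quoted earlier.

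The step I expect to be the main obstacle is the bookkeeping that prevents double-counting surplus: a single invocation can simultaneously zero a back edge and regenerate surplus at a neighbouring source, and that surplus later triggers further invocations, so I must assign the disappearance and clearing charges so that every $O(n)$ block of work is paid exactly once and every charge is ultimately attributable to a rise of some $\beta_j$. The subtle part is verifying that \textbf{pushFlowCycle}, despite encoding arbitrarily many revolutions in one closed-form update, contributes only a constant number of clearing or disappearance events per call (at least one edge is saturated or the entry surplus is zeroed) rather than one per simulated revolution; this is exactly what keeps the per-rise bound at $O(n^2)$ rather than something unbounded.
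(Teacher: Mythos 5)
Your overall strategy --- bounding each invocation at $O(n)$, classifying invocations by how \textbf{findPath} terminates, disposing of the $2$-cycle case via an immediate rise of $\beta_j$ (using Lemma~\ref{lemma_oneloop}), and amortizing everything else against surplus disappearances and back-edge clearings, with at most $n$ clearings per sink between consecutive rises of $\beta_j$ --- is the same as the paper's, and the parts you spell out are sound, including the observation that \textbf{pushFlowCycle} contributes only a constant number of terminal events per call.

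The one event your charging scheme does not cover is the saturation of a \emph{forward} edge. These procedures are written for the capacitated problem, so the amount pushed is also limited by $u_{i_kj_k}-f_{i_kj_k}$ (line~\ref{proc_pushpath:line1} of \textbf{pushFlowPath}); a push can therefore stop because a forward edge reaches capacity, an outcome that neither exhausts the starting surplus, nor clears a back edge, nor ends at an unsaturated sink or a zero-$\alpha$ source, yet still deposits new surplus at the source just before the saturated edge, triggering further invocations. Your two charging targets (surplus disappearance and back-edge clearing) leave this work unpaid. The paper closes the gap with a third target: a saturated forward edge stays saturated, and hence cannot limit another push, until $\gamma_{ij}$ falls to zero, which happens only when $\beta_j$ rises and the edge is redesignated a back edge; by Lemma~\ref{beta-increase} the same edge cannot recur in this role without a further rise of $\beta_j$, so each rise absorbs at most $n$ such saturation events per sink, an extra $O(n^2)$ charge that preserves the bound. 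Adding this third target, together with the matching $O(n)$ charge for the surplus it regenerates, completes your argument.
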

\begin{proof}
We account for the work required for each push of flow along a path or cycle
via a charging argument. 
\ignore{
We will use the following key fact:\\
{\bf Fact 1:} If the flow on edge  $ij$ at valuation
$\beta_j'= \beta_j/(1+\epsilon)$ is reduced to zero, 
then $ij$ occurs as a back edge only when 
the value of $\beta_j$ rises to $\beta_j(1+\epsilon)$.\\
This fact follows due to the fact the $\beta_j$ is monotonically non-decreasing.
Furthermore, flow can only be pushed on edge $ij$ at an increased valuation 
and an edge
is termed a back-edge when there is flow on the edge that has been pushed at a valuation of $\beta_j/(1+\epsilon)$.
}
Procedure {\bf findPath} requires $O(n)$ steps.
Further, during {\bf pushFlowPath}
there are at most $n$ pushes of flow along edges of the path
before one of these happens
\begin{enumerate}
\item
A 2-cycle  is encountered.

If the 2-cycle is between $i$ and $j$, source $i$ increases the
assigned value $y_{ij}$ to $\beta_j$. Since $ji$ was the only back-edge
on $j$ (Lemma~\ref{lemma_oneloop}), there is a change in $\beta_j$. We charge the $O(n)$ pushes 
to this rise in $\beta_j$.

\item
A source with $\alpha=0$ is encountered. This case is analyzed together with the
next case.

\item
An unsaturated sink is encountered.

The previous cases can be considered together with this case as they are similar. In both 
these cases, the surplus from a node travels along the edges of the derived graph
and reaches a source or a sink where no further push of flow is required. The
following sub-cases arise:

\begin{enumerate}

\item
The surplus at the starting node disappears and no new surplus is created

In this case, we charge the  at most $n$ pushes of flow along the edges of the path
to the node at which the surplus 
disappears. We charge such a node only once for the disappearance of the surplus. 
Note that any subsequent
appearance of surplus will be charged to the creation of the surplus and
is caused only in the case when the flow on 
a back-edge becomes zero, which is the second case explained below.  
Considering all possible source-sink pairs,
there are, therefore, at most $n^2$ pushes charged to the nodes
before either the surplus at each source is removed or a surplus appears. 
\ignore{
Appearence of a surplus imples a change in the value
of $\beta$ at the sink $j$ where the flow on the back edge 
We call the sequence of steps between rises in
$\beta$ as a phase. There is a change in some $\beta_j$ at the 
end of the phase. We charge these $n^2$ pushes to this rise
in $\beta_j$. If no $\beta$ rises the algorithm terminates.
}
\item
There is a back edge $ji$ such that $f_{ij}$ becomes zero and a new surplus
at $i'$ is created where $i'j$ was the preferred edge for $i'$.

In this case we charge the push on edges of the path leading to
this back edge, i.e.  a charge of $n$, to
the creation of the surplus at this back edge.
By lemma~\ref{beta-increase} this back edge can not re-appear unless $\beta_j$ changes. We allow an additional
charge of $n$ to pay for the possible disappearance of the flow
from $i'$.

There are at most $n$ back-edges on sink $j$. The value $\beta_j$ changes 
when the flow on each of these edges is reduced to zero. Therefore,
the total charge on all the back-edges incident out of a sink is at most $2n^2$ 
pushes of flow after which there is a change in the value $\beta_j$. This 
implies we have $2n^2$ pushes charged to a rise in $\beta_j$. 

\item
There is a forward edge $ij$ which becomes saturated to 
capacity, i.e. $f_{ij}=u_{ij}$.
If a forward edge is saturated, it stays saturated 
until it becomes a back edge when $\gamma_{ij}=0$. 
When an edge  $ij$ becomes a back edge, by lemma~\ref{beta-increase}
this edge cannot occur again as a back edge  until $\beta_j$ increases.
We can, therefore charge this
work to the corresponding rise in the value of $\beta_j$.
The rise in $\beta_j$ could be charged $O(n^2)$ times as in the
previous case.
Note that as a result of this 
saturating push, surplus 
could be generated on the source just before the edge $ij$. 
We put an additional charge 
of $O(n)$ on this edge to pay for the possible disappearance of this surplus in the future. 

\end{enumerate}

\item
A cycle is encountered. In {\bf pushFlowCycle}, 
all the calculations for determining the limiting number of
revolutions can be done in $O(n)$ time, since the cumulative transfer ratio can be 
determined by traversing the cycle once. As noted above, in all the cases
that arise when flow is pushed around the cycle,
either (i) the surplus that has been pushed into the cycle reduces to 0 
or (ii) the flow on one of the back edges is reduced to zero or (iii) the flow
on one of the forward edges reaches capacity. The $O(n)$
amount of work can be charged to either the disappearance 
of surplus at a source, or a  reduction of flow to zero on a back-edge $ij$, or
the saturation of a forward edge. This implies  an
increase in  $\beta_j$, subsequently, in the same way as 
described for simple paths above. 

\end{enumerate}

From the case analysis above, we conclude the total charge on a rise
in $\beta_j$ is  $O(n)$ from case 1 and $O(n^2)$ from case 2,3 and 4, hence
proving the lemma.
\end{proof}

\begin{lemma}
The algorithm terminates in $O(\epsilon^{-1}(n^2+ n\log{m})m\log U)$ time, where\\ $U =  
\max_{ij}(\frac{c_{ij}}{p_{ij}}) / \epsilon \min_{ij} (\frac{c_{ij}}{p_{ij}})$.
\end{lemma}

\begin{proof}
From lemma \ref{lemma_charge}, there are $O(n^2)$ charges per rise in  the  
value $\beta_j$. Once a phase is over, the change in value of $\beta_j$ causes sources to 
update the heaps (one at each source). This takes no more than $O(n\log{m})$ time.

The update  procedure takes $O(1)$ amortized time by maintaining a set of
values of  $y_{ij}$,
each value being  either $\beta_j$ or $\beta'_j$.
Thus, we require $O(n\log{m})$ at the end of a phase
to update all the data structures. The preprocessing requires $n$ steps. 

Each rise causes the quantity $\beta_{j}$ to grow by a factor of $1+\epsilon$, starting with
$\epsilon \min_i (\frac{c_{ij}}{p_{ij}})$. 
For each sink $t_j$ such that $\beta_j>0$, there exists an $i$ such that $s_i$ ships flow to $t_j$ and  thus
$0 \leq \alpha_i = c_{ij} - p_{ij}\beta_j$. This  leads to $\beta_j \leq c_{ij}/p_{ij}
\leq \max_{ij}\frac{c_{ij}}{p_{ij}}$. There are $m$ different $\beta_j$. Therefore
there can be no more than $O(m\log_{1+\epsilon} U)$ total changes in  $\beta_j$ for all $j$.  
Combining this with the fact that every rise in $\beta_j$ is charged  $O(n^2+n\log{m})$ amount
of work and $\log_{1+\epsilon}U = O(\epsilon^{-1}\log U)$, we have the result.
\end{proof}

\ignore{
Consider an 
iteration of the while loop in Algorithm \ref{alg_modauc} {\bf Modified Auction}. 
Each iteration results in either 
(i)   disappearance of flow at either the starting source, or at a sink or in a cycle
(ii) an edge
being saturated in the forward direction or 
(iii) flow being reduced to zero on a back edge.
This is determined by the procedures 
\ref{proc_findpath} \ref{proc_pushpath} and \ref{proc_pushcycle}, each procedure requiring $O(n)$ steps.
After finding a flow path or a cycle, note that either 
flow is reduced to zero at a source  or some
edge on  the path or the cycle is saturated. We consider these cases below:
\begin{itemize}
\item
If the surplus at a source disappears without any edge being saturated, 
we charge this work to this source. There are $n$ sources and each path or
cycle discovery and
push takes $O(n)$ time. Therefore
the total charge is $O(n^2)$ before surplus at all sources disappears or
 a sink is saturated effecting a
rise in $\beta_j$. 

\item
If a forward edge is saturated, it stays saturated unless $\beta_j$ increases since a saturated edge
becomes a back edge only when $\gamma_{ij}$ reduces to zero. We can, therefore charge this
work to the corresponding $\beta_j$ rise. Note that as a result of this 
saturating push, surplus 
could be generated on the source just before the edge $ij$. We put an additional charge 
of $O(n)$ on this edge to pay for the possible disappearance of this surplus in the future. 

Since each sink could have a total of $n$ edges, the total charge on 
each $\beta_j$ rise could be $O(n^2)$.  

\item
If the bottleneck is a back edge, we similarly charge the $O(n)$ work (including the charge for the possible disappearance 
of accumulated surplus on the preceding source) to the back edge where the 
flow is reduced to zero (also referred to as the back edge being zeroed out).
We then note that
for every $\beta_j$ change there are at most $O(n)$ back edges that can be zeroed out, as
once a back-edge is zeroed out, the flow can only increase on this edge if it becomes
the preferred edge i.e. $y_{ij} = \beta_j$ (see \ref{proc_preprocess}). 
It can then become a back edge again, 
only when  $\beta_j$ has changed. 
As there are at most $n$ back-edges per sink there is a
total of $O(n^2)$ charge for each $\beta_j$ rise.

\end{itemize}
The Procedure \ref{proc_preprocess} {\bf preprocess} is called after every $\beta_j$ change
and in the beginning. 
This procedure updates the preferred edges and back edges for each source connected to 
this sink. The time 
required for updating the back edges is $O(n\log{m})$ since the heap has to be updated
for each source, each heap containing $O(m)$ edges.  The preferred edge can then
be updated.

To account for the total work we note that each $\beta_j$ rises at most 
$\epsilon^{-1}\log U$ times. There are a total of $m$ different $\beta_j$.
Therefore, the total work done is $O(\epsilon^{-1}(n^2 + n\log m)m\log U) $.
Combining this result with lemmas \ref{lemma_BTSfeasible} and \ref{lemma_BTSapprox} 
we conclude:

\begin{theorem}

The budgeted transshipment problem can be solved to within a $(1-\epsilon)$ 
approximation in $O(\epsilon^{-1}(n^2+n\log{m})m\log U)$ time.

\label{theorem_BTScomp}

\end{theorem}
}

\section{Concave, Piecewise Linear Profit}
\label{piece-wise}

We now describe how to extend the above algorithm to a profit function which 
is concave and piecewise linear. We use the common edge splitting technique 
in order to reduce the problem to the linear profit function. This transformation
is very similar to the well known transformation of convex cost mincost flows to
linear mincost flows, see \cite{ahuja}

Given an instance $I$ of the problem with concave piecewise linear profits,
we map it to an instance $I'$ of the capacitated but linear profit version.

Let the profit function be defined as $c_{ijk} \in \mathbb{Z}$ for the edge $ij$ and interval
$k$ each interval being of fixed length say $l$, total number of intervals 
being ${\cal L}$. 

\begin{figure}[h]
\centerline{\epsfysize=170pt\epsfbox{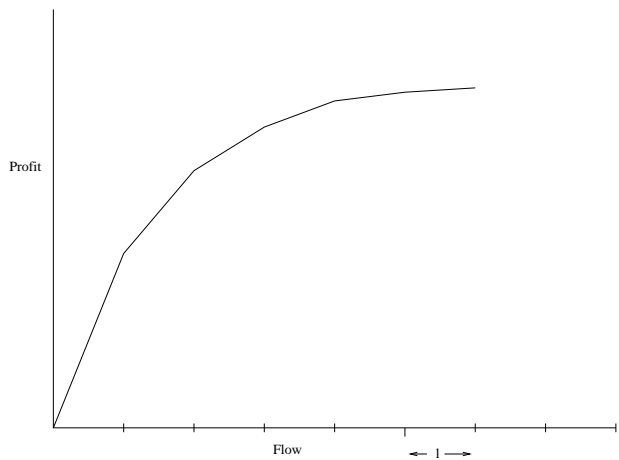}}
\caption{The profit function for an edge $ij$. The slope in an interval $k$ is $c_{ijk}$. Number
of intervals is ${\cal L}$}
\end{figure}

We then convert the problem by  splitting each edge into ${\cal L}$ edges, the
$k$th edge being  of capacity
$l$ and profit $c_{ijk}$. The price of each is the same as the one for edge $ij$. 

\begin{figure}[h]
\centerline{\epsfysize=180pt\epsfbox{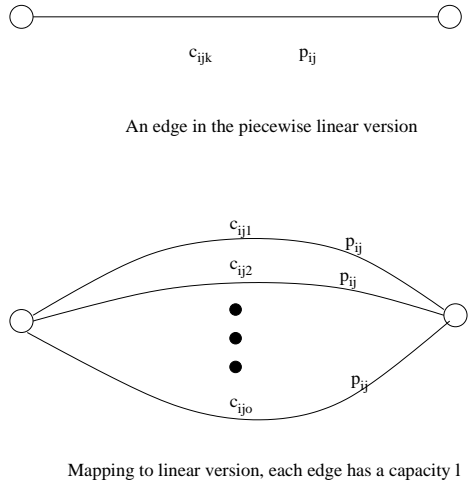}}
\caption{Construction of multiple edges corresponding to one original edge.}
\end{figure}

It is easy to see that a solution to instance $I'$ can be easily converted into a solution for
instance $I'$. A solution to $I'$, however, can not always be converted into a 
solution for $I$. Nevertheless, any solution to $I'$ can always be modified into 
a solution with greater or same profit such that the new solution 
can be transformed into a solution for $I$. 

If for a solution $f_{ijk}$ the following property holds;  $\not \exists z_1, z_2$ such that
 $z_1 < z_2$ and $f_{ijz_1} < l$ and $f_{ijz_2} > 0$, then we can convert the solution to one for $I$. 

If this property does not hold then the profit can be increased/kept the same
while eliminating such pairs. Transfer $\min (l - f_{ijz_1},f_{ijz_2})$
from the edge $ijz_2$ to $ijz_1$. The profit $c_{ijz_1} \geq c_{ijz_2}$ because of
concavity and thus we obtain a more profitable solution. 
Repeating these transfers, we get the property that
 $\not \exists z_1, z_2$ such that
 $z_1 < z_2$ and $f_{ijz_1} < l$ and $f_{ijz_2} > 0$. 

This approach can also be used to map an approximate solution. As a consequence, the
algorithm \ref{alg_modauc} {\bf Modified Auction} can be used to approximate the above problem
after the appropriate transformation. The transformation, however,
does involve an increase in the number of edges. We conclude with the following result.

\begin{theorem}
The Budgeted Transportation problem where the profit function is piecewise-linear and concave,
can be approximated within a factor of $(1-\epsilon)$ in 
$O(\epsilon^-1(n^2+ n\log{m}){\cal L}
m\log U)$ time
where ${\cal L}$ is the number of intervals in the profit function.
\end{theorem}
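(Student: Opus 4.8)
The plan is to realize this result as a black-box reduction to the capacitated problem BTS, solved by Algorithm~\ref{alg_modauc} (\textbf{Modified Auction}), exactly along the edge-splitting construction sketched above. Given the instance $I$ with concave, piecewise-linear profits, I build the instance $I'$ in which every original edge $ij$ is replaced by ${\cal L}$ parallel edges $ij1,\ldots,ij{\cal L}$, where edge $ijk$ has capacity $l$, constant profit rate $c_{ijk}$ (the slope on the $k$th interval), and the \emph{same} price $p_{ij}$ as the original edge. Since the capacities $a_i$, budgets $b_j$ and prices are untouched, $I'$ is a legal BTS instance with $n$ sources, $m$ sinks and at most $nm{\cal L}$ edges. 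I would then run Modified Auction on $I'$, obtain a $(1-\epsilon)$-approximate primal solution, and map it back to $I$.

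The correctness step is to argue that this reduction both preserves the optimum and transports the approximation guarantee. First, any feasible $f$ for $I$ lifts to a feasible solution for $I'$ of equal profit by filling the segments of each pair $(i,j)$ in increasing order of $k$, so $OPT(I') \ge OPT(I)$. Conversely, I would invoke the normalization already described: if a solution $f_{ijk}$ violates the prefix property (i.e.\ $\exists\, z_1<z_2$ with $f_{ijz_1}<l$ and $f_{ijz_2}>0$), transfer $\min(l-f_{ijz_1},f_{ijz_2})$ from $ijz_2$ to $ijz_1$; this leaves all constraints of $I'$ satisfied (same price, same total flow on the pair) and, because $c_{ijz_1}\ge c_{ijz_2}$ by concavity, does not decrease the profit. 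Iterating eliminates all violations and yields a solution that projects to a feasible solution of $I$ via $f_{ij}=\sum_k f_{ijk}$ of at least the same profit. Applied to the output of Modified Auction, whose value is $\ge (1-\epsilon)OPT(I')=(1-\epsilon)OPT(I)$ by the approximation theorem for BTS, this normalization produces a feasible solution to $I$ of value $\ge (1-\epsilon)OPT(I)$, giving the claimed ratio.

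For the running time I would re-examine the charging analysis of Lemma~\ref{lemma_charge} on $I'$. The crucial observation is that each source still designates a \emph{single} preferred edge (its globally best unsaturated segment), so the derived graph retains one forward edge per source and every discovered path or cycle still has length $O(n)$; the per-push work is therefore unchanged. What grows is the number of back edges incident to a sink: a sink $j$ may now receive flow on up to ${\cal L}$ segments from each of the $n$ sources, so the bound ``$n$ back edges per sink'' becomes $n{\cal L}$, and the charge accumulated against a single rise of $\beta_j$ grows from $O(n^2)$ to $O(n^2{\cal L})$. The heaps maintained per source now hold $O(m{\cal L})$ entries, costing $O(n\log(m{\cal L}))$ per $\beta$-update. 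Since the total number of $\beta$-rises remains $O(\epsilon^{-1}m\log U)$ (the upper bound on $\beta_j$ and its geometric growth are unaffected by splitting), multiplying yields the stated $O(\epsilon^{-1}(n^2+n\log m){\cal L}\,m\log U)$ time.

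The step I expect to be the main obstacle is precisely pinning down that the blow-up is only a factor ${\cal L}$ and not ${\cal L}^2$: one must verify that parallel segments inflate only the back-edge count per sink (hence the charge per $\beta_j$-rise) while leaving path and cycle lengths at $O(n)$, and that Lemma~\ref{beta-increase} still prevents a zeroed-out back segment $ijk$ from reappearing before $\beta_j$ rises again, so that the $O(n{\cal L})$ back segments per sink can be amortized against $\beta_j$-rises rather than recharged. Checking that the feasibility and complementary-slackness arguments (Lemmas~\ref{lemma_BTSfeasible} and~\ref{lemma_BTSapprox}) are oblivious to the presence of parallel edges is routine but should be confirmed.
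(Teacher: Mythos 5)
Your proposal follows essentially the same route as the paper: split each edge into ${\cal L}$ capacitated segments of equal price and per-interval slope, run \textbf{Modified Auction} on the resulting BTS instance, and use the concavity-based transfer argument to restore the prefix property so the solution maps back to $I$ without loss of profit. Your re-examination of the charging argument (back edges per sink growing to $n{\cal L}$ while path lengths stay $O(n)$) is in fact more explicit than the paper, which simply attributes the extra factor ${\cal L}$ to the increase in the number of edges, but the substance is the same.
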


\section{Conclusion}
\label{sec_conc5}

We have presented an approximation scheme for Budgeted Transportation problem. We further
generalize it to the capacitated version, which can then used to solve the problem
with a non-linear profit version.

The technique used is a kind of primal-dual mechanism based on Auctions. The variables
are modified in small steps in order to maintain approximate slackness conditions. We
use augmenting path based mechanism to improve the complexity of the scheme. However,
because of the nature of the problem, the cycles may arise in addition to simple paths. 
We have shown that the cycles can be handled without too much work. 

An interesting open question is whether it is possible to solve the problem exactly and
as a consequence, generalized flow using the above scheme.

\bibliographystyle{plain}
\bibliography{ref}

\section{Appendix-Relation to Generalized Flow} 
\label{sec_relation}

Apart from being a natural extension of the Transportation Problem BTP is also
related to a well known flow problem, generalized flows. As mentioned above,
it is a special case of Generalized Flow. It is interesting to see, however, that
there  also exists a reverse relationship. We show how to transform the 
generalized flow problem
to Budgeted Transportation.

\paragraph*{Mincost Generalized Flow: }
We are given a digraph $G(V,A)$, a cost function $c:A\rightarrow \mathbb{R}$, a
capacity function $u:A\rightarrow \mathbb{R}^+$, a multiplier function 
$\mu:A \rightarrow \mathbb{R}^+$ a source $s \in V$, the supply at the source 
$d_s$, a sink $t \in V$ and the demand $d_t$ at the sink. The goal
is to find a flow function, $f:A \rightarrow \mathbb{R}^+$  such that the flow 
is conserved at the nodes,
is multiplied across the arcs and meets the supply and demand constraints, 
while minimizing the total cost of the flow. This problem can be expressed as
the following linear program.

\[
{\rm minimize\ \ \ \ \ }\sum_{ij} c_{ij}f_{ij}
\]

\[
{\rm subject\ to:}\\ 
\]

\begin{eqnarray}
\sum_{i} \mu_{ij}f_{ij} - \sum_{k} f_{jk} &=& 0 \ \ \ \ \forall j \in V/\{s,t\}
\label{eqn_conservation_cons}\\
\sum_{k} f_{sk} &=& d_s\\
\sum_{i} \mu_{it}f_{it} &=& d_t\\ 
f_{ij} &\leq& u_{ij} \ \ \ \ \forall ij \in A
\label{capacity_cons}\\
f_{ij} &\geq& 0 \ \ \ \ \forall ij \in A
\end{eqnarray}

\paragraph*{Mincost Budgeted Transportation: }

This problem is a more generalized version of the conventional 
transportation problem. We are given a bipartite graph $B$, consisting
of sources $S=\{s_i\}$ and sinks $T=\{t_j\}$. A supply function 
$a:S\rightarrow \mathbb{R}^+$, a budget function $b_j \rightarrow  \mathbb{R}^+$, a cost 
function $c:S\times T \rightarrow \mathbb{R}$ and a price function 
$p: S\times T \rightarrow \mathbb{R}$. The goal 
is to come up with a flow function $f:S\times T \rightarrow \mathbb{R}^+$ such
that the total flow going out of a source is equal to the supply,
the total price of the flow coming into a sink is equal to the budget,
and the cost of the flow is minimized. The problem is stated below 
as an LP.

\[
{\rm minimize\ \ \ \ \ }\sum_{i\in S,j \in T} c_{ij}f_{ij}
\]
\[
{\rm subject\ to:}\\ 
\]
\begin{eqnarray}
\sum_{j \in T} f_{ij} = a_i \ \ \ \ i \in S
\label{source_cons}\\
\sum_{i \in S } p_{ij}f_{ij} = b_j \ \ \ \  j \in T
\label{sink_cons}\\
f_{ij} \geq 0 \ \ \ \ i \in S, j \in T
\end{eqnarray}

We first show that an MCGF problem instance  transforms to a minimum Budgeted
Transportation problem instance. The optima of the two instances are the
same and the solutions themselves can be mapped to each other. The transformation
is valid even for approximate solutions.

Let $I$
be an instance of the MCGF problem. We transform it
to an instance $I'$
of Mincost Budgeted Transportation problem as follows

\begin{itemize}
\item
For each  node $i$ in $V$, we have a corresponding source
$s_i \in S$ with capacity $a_i = \sum_j u_{ij}$.
\item
For each arc  $(i,j)$ in $A$, we have a corresponding sink in $T$,
$t_{ij}$, with budget $b_{ij}=u_{ij}$.

\item
We have one additional sink $t_s$ with budget equal to the supply $d_s$ in $I$.
It is connected to the source corresponding to source node $s$ with an
edge whose price is 1 and cost is 0.

\item
Each sink $t_{ij}$ has two incoming edges from $s_i$ and $s_j$, and the
cost and price are as follows
\begin{eqnarray*}
c'_{s_i,t_{ij}} = 0\\
p_{s_i,t_{ij}} = 1\\
c'_{s_j,t_{ij}} = c_{ij}/\mu_{ij}\\
p_{s_j,t_{ij}} = \frac{1}{\mu_{ij}}
\end{eqnarray*}


\item
The capacity of the source corresponding to the sink in $I$ is equal to
the demand $d_t$


\end{itemize}

Consider a  pair  of flow functions $F$ and $F'$ for the instances $I$ and $I'$, respectively.
Let these be related to each other according to the relations defined for 
all $ij$ pairs in the instance $I$.
\begin{eqnarray}
f'_{s_{j},t_{ij}} = f_{ij} \mu_{ij} \\ 
f'_{s_j,t_{jk}} = u_{jk} - f_{jk}
\end{eqnarray}
Further, for the source $s_s$ corresponding to the node $s$,
\[ f'_{s_s, t_s} =  d_s    \]

\begin{lemma}
For a given solution $F$ to the original instance $I$, $F'$ is a feasible solution to
$I'$ of the same cost.
\label{FtoFdash} 
\end{lemma}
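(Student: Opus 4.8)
The plan is to check, one at a time, that $F'$ satisfies the nonnegativity, sink, and source constraints of the Mincost Budgeted Transportation LP for $I'$, and then that its cost equals that of $F$. Every constraint of that LP is an equality at a source (\ref{source_cons}), an equality at a sink (\ref{sink_cons}), or a nonnegativity requirement, so it suffices to substitute the defining relations $f'_{s_j,t_{ij}} = f_{ij}\mu_{ij}$, $f'_{s_i,t_{ij}} = u_{ij}-f_{ij}$, and $f'_{s_s,t_s} = d_s$ into each and compare.

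First I would dispose of the easy pieces. Nonnegativity holds because the head edge of arc $(i,j)$ carries $f_{ij}\mu_{ij}\geq 0$ (as $f_{ij}\geq 0$, $\mu_{ij}\geq 0$) and the tail edge carries $u_{ij}-f_{ij}\geq 0$ precisely by the capacity bound (\ref{capacity_cons}) of $I$. For the sink equality at $t_{ij}$ I would substitute the prices $p_{s_i,t_{ij}}=1$ and $p_{s_j,t_{ij}}=1/\mu_{ij}$, giving $1\cdot(u_{ij}-f_{ij}) + \frac{1}{\mu_{ij}}\cdot f_{ij}\mu_{ij} = u_{ij} = b_{ij}$; and the extra sink $t_s$ receives weighted flow $1\cdot d_s = d_s$, which is its budget. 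So (\ref{sink_cons}) holds at every sink.

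The crux, and the step I expect to be the main obstacle, is the source equality (\ref{source_cons}), since this is where the generalized-flow structure of $I$ actually enters. For a source $s_i$ with $i\notin\{s,t\}$, the adjacent sinks are exactly $t_{ij}$ for arcs $(i,j)$ leaving $i$ (each receiving $u_{ij}-f_{ij}$) and $t_{ki}$ for arcs $(k,i)$ entering $i$ (each receiving $f_{ki}\mu_{ki}$), so the total outflow is $\sum_j(u_{ij}-f_{ij}) + \sum_k \mu_{ki}f_{ki}$. I would cancel the $\sum_j u_{ij}$ against the target value $a_i=\sum_j u_{ij}$, leaving the requirement $\sum_k \mu_{ki}f_{ki} - \sum_j f_{ij} = 0$, which is exactly the conservation equation (\ref{eqn_conservation_cons}) at node $i$. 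For $s_s$ the identical computation, now including the edge to $t_s$ carrying $d_s$, reduces to $\sum_j f_{sj}=d_s$, the source constraint of $I$ (using the standard assumption that $s$ has no incoming arc); and for $s_t$, whose capacity is set to $d_t$ and which sees only incoming arcs, it reduces to $\sum_k \mu_{kt}f_{kt}=d_t$, the sink constraint of $I$. Thus (\ref{source_cons}) holds at every source.

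Finally, cost preservation is immediate: the tail edges and the $t_s$-edge all have zero cost, so only the head edges contribute, and $\sum_{ij} c'_{s_j,t_{ij}}f'_{s_j,t_{ij}} = \sum_{ij}\frac{c_{ij}}{\mu_{ij}}\cdot f_{ij}\mu_{ij} = \sum_{ij}c_{ij}f_{ij}$, the cost of $F$. Combining the three feasibility checks with this cost identity establishes that $F'$ is feasible for $I'$ and has the same cost, which is the lemma.
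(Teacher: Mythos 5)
Your proof is correct and follows essentially the same route as the paper's: substitute the defining relations for $F'$ into each constraint of $I'$, with the source equality at $s_i$ reducing to the flow-conservation equation of $I$ at node $i$ and the cost identity following from the choice $c'_{s_j,t_{ij}}=c_{ij}/\mu_{ij}$. You are in fact slightly more careful than the paper in a few spots --- the explicit nonnegativity check, the explicit treatment of $s_t$ and of the assumption that $s$ has no incoming arcs, and the correct pairing of prices with flows in the sink computation (the paper's displayed calculation there transposes the two edge labels) --- but the argument is the same.
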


\begin{proof}
We show that $F'$ is a feasible solution to $I'$ by showing that it meets the 
constraints \ref{source_cons} and \ref{sink_cons}. 

For each source $s_j$, except the source $s_s$,
the total outgoing flow is the sum of flows
to the sinks corresponding to outgoing and incoming edges on node $j$,
which is,

\begin{eqnarray*}
 &=& \sum_i f'_{s_j,t_{ij}} + \sum_k f'_{s_{j},t_{jk}}\\
\end{eqnarray*}

Using the mapping of solutions, this is equal to 
\begin{eqnarray*}
 &=&  \sum_{i} f_{ij}\mu_{ij} + \sum_k (u_{jk} - f_{jk})\\
 &=& \sum_k u_{jk} + \sum_i f_{ij}\mu_{ij} - \sum_k f_{jk}\\
 &=&  \sum_k u_{jk} = a_j
 \ \ \ \ \ {\rm [using\ equation\ \ref{eqn_conservation_cons}]}
\end{eqnarray*}
Therefore, constraint \ref{source_cons} is met.

For the source $s_s$, a similar analysis shows that constraint
\ref{source_cons} is met,
since $\sum_k f_{sk} = d_s$ and $f'_{s_s , t_s} = d_s   $.

For each sink $t_{ij}$ the total price of the incoming flow is 
\begin{eqnarray*}
f'_{s_i,t_{ij}}/\mu_{ij} + f'_{s_j,t_{ij}}\\
= f_{ij} + u_{ij} - f_{ij} = u_{ij} 
\end{eqnarray*}
Therefore, the constraints \ref{sink_cons} are met.

Also, the total cost of $F'$ 
\begin{eqnarray*}
\sum_{t_{ij}} (0. f'_{s_{i},t_{ij}} +  c'_{s_j,t_{ij}}  f'_{s_{j},t_{ij}})
= \sum_{ij} f_{ij}\mu_{ij} c_{ij}/\mu_{ij} = \sum_{ij} c_{ij} f_{ij}
\end{eqnarray*}
which is same as the cost of $F$.
\end{proof}

\begin{lemma}
For a given solution $F'$ to the  instance $I'$, $F$ is a feasible solution to
$I$ and is of the same cost.
\label{FdashtoF}
\end{lemma}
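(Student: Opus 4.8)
The plan is to run the argument of Lemma~\ref{FtoFdash} in reverse, exploiting that the transformation becomes a bijection on feasible solutions once the equality constraints are invoked. First I would \emph{define} the recovered flow by inverting the mapping on the head edges: for every arc $(i,j)\in A$ set $f_{ij} = f'_{s_j,t_{ij}}/\mu_{ij}$, where $s_j$ is the source attached to the head of the arc. Nonnegativity of $F$ is immediate from nonnegativity of $F'$ and $\mu_{ij}>0$. For the capacity bound \ref{capacity_cons}, I would invoke the budget \emph{equality} on sink $t_{ij}$ (constraint \ref{sink_cons}), namely $f'_{s_i,t_{ij}} + f'_{s_j,t_{ij}}/\mu_{ij} = u_{ij}$; since $f'_{s_i,t_{ij}}\ge 0$ this yields $f_{ij}\le u_{ij}$, and it simultaneously pins down the tail-edge flow as $f'_{s_i,t_{ij}} = u_{ij}-f_{ij}$, a relation I will need repeatedly.

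The heart of the proof is recovering flow conservation \ref{eqn_conservation_cons}. For an internal node $j$, I would write the source equality (constraint \ref{source_cons}) $\sum_{t} f'_{s_j,t} = a_j$ at $s_j$, splitting the sinks incident to $s_j$ into those coming from outgoing arcs $(j,k)$, where $s_j$ is the tail and carries $u_{jk}-f_{jk}$ by the relation just established, and those coming from incoming arcs $(i,j)$, where $s_j$ is the head and carries $\mu_{ij}f_{ij}$ by definition. Substituting $a_j=\sum_k u_{jk}$ and cancelling the $\sum_k u_{jk}$ terms leaves exactly $\sum_i \mu_{ij}f_{ij} = \sum_k f_{jk}$, which is \ref{eqn_conservation_cons}.

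The boundary constraints are handled by the two special features of the construction. For the supply, the budget equality on the singleton sink $t_s$ forces $f'_{s_s,t_s}=d_s$; feeding this into the source equality at $s_s$, and using that $s$ carries no incoming arcs, isolates $\sum_k f_{sk}=d_s$. For the demand, the source $s_t$ was given capacity $a_t=d_t$ rather than $\sum_k u_{tk}$, and since $t$ has no outgoing arcs its source equality reads $\sum_i \mu_{it}f_{it}=d_t$ directly. Finally, cost preservation is the same one-line computation as in Lemma~\ref{FtoFdash}: only the head edges carry nonzero cost $c_{ij}/\mu_{ij}$, so $\sum c'f' = \sum_{ij}(c_{ij}/\mu_{ij})\mu_{ij}f_{ij} = \sum_{ij}c_{ij}f_{ij}$.

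I expect the only real obstacle to be bookkeeping rather than mathematics: one must consistently track which of the two sources feeding $t_{ij}$ plays the tail role and which the head role, and be careful to invoke the budget and source \emph{equalities} (not inequalities) so that an \emph{arbitrary} feasible $F'$, not merely one arising from the forward map, is forced into the rigid form $f'_{s_i,t_{ij}}=u_{ij}-f_{ij}$. Once that rigidity is established, every constraint of $I$ falls out of the corresponding equality constraint of $I'$, mirroring the forward lemma.
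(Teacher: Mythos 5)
Your proposal is correct and follows essentially the same route as the paper: you invert the head-edge relation to define $f_{ij}=f'_{s_j,t_{ij}}/\mu_{ij}$, use the budget equality at $t_{ij}$ to get both the capacity bound and the rigidity $f'_{s_i,t_{ij}}=u_{ij}-f_{ij}$, and then substitute into the source equality at $s_j$ (with $a_j=\sum_k u_{jk}$) to recover conservation, handling $s$, $t$, and the cost exactly as the paper does. The one point you make explicit that the paper leaves implicit is that the \emph{equality} form of the constraints is what forces an arbitrary feasible $F'$ into the image of the forward map, which is a worthwhile clarification but not a different argument.
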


\begin{proof}
To show that the conservation of flow constraint is met, consider
the total incoming flow on a node $j$, which is 

\begin{eqnarray*}
\sum_i \mu_{ij}f_{ij} = \sum_{i} f'_{s_j,t_{ij}}\\
\end{eqnarray*}

There are two sets of edges incident on source $s_j$. The ones
corresponding to incoming edges on node $j$, $(s_j, t_{ij})$ and
others that corresponding to outgoing edges $(s_j, t_{jk})$. The sum
of  the flow  on these is equal to $a_j$. So the the above is

\begin{eqnarray*}
= a_{j} - \sum_{k} f'_{s_j,t_{jk}} \\
\end{eqnarray*}

Since the total price of incoming flow on sink $t_{jk}$ is equal to
$f'_{s_j,t_{jk}} + \frac{f'_{s_k,t_{jk}}}{\mu_{jk}}$ which is
equal to the budget of the sink $t_{jk} = b_{jk}$, the above becomes,

\begin{eqnarray*}
= a_{j} - \sum_k (b_{jk} - \frac{f'_{s_k,t_{jk}}}{\mu_{jk}})\\
= a_{j} - \sum_k u_{jk} + \sum_k f_{jk}
\ \ \ \ {\rm since\ } b_{jk} = u_{jk}\\
= \sum_k f_{jk} 
\ \ \ \ {\rm since\ } a_{j} = \sum_k u_{jk} 
\end{eqnarray*}
which is equal to the outgoing flow $\sum_k f_{jk}$.
For the case of the source and sink, there are no incoming or outgoing
edges, respectively. But,
\[ \sum_i \mu_{ij}f_{it} = \sum_{i} f'_{s_t,t_{it}} = d_t \]
thus ensuring the sink demand in $I$ and
\[ a_s- d_s = \sum_{k} f'_{s_s,t_{sk}} =  \sum_k u_{sk}-\sum_k f_{sk}.     \]
Thus the outgoing flow at the source,
\[\sum_k f_{sk} =d_s \]
since $a_s = \sum_k u_{sk}$.

Now we show that the capacity constraints are met. For each edge 
$(i,j) \in A$, from constraint 30,  we have
\begin{eqnarray*}
f'_{s_i t_{ij}} + \frac{f'_{s_j t_{ij}}}{\mu_{ij}} = b_{ij} = u_{ij}\\  
\Rightarrow \frac{f'_{s_j,t_{ij}}}{\mu_{ij}} \leq u_{ij}
\ \ \ \ {\rm as\ } f'_{s_i,t_{ij}} \geq 0\\
\Rightarrow  f_{ij} \leq u_{ij}\\
\end{eqnarray*}

The cost of the solutions are mapped exactly as in the proof of lemma 

\ref{FtoFdash}
\end{proof}

Bey combining the lemmas \ref{FtoFdash} and \ref{FdashtoF} we conclude
the following theorem regarding the equivalence of the two problems. 

\begin{theorem}
Given an instance of the MCGF problem $I$ we can construct an instance
of Budgeted Transportation Problem such that the optimum solutions
of each have the same cost.
\end{theorem}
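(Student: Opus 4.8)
The plan is to derive the theorem as an immediate consequence of the two preceding lemmas, which together establish a cost-preserving correspondence between the feasible regions of $I$ and $I'$. First I would fix the construction of $I'$ exactly as described above: one source $s_i$ per node $i \in V$, one sink $t_{ij}$ per arc $(i,j) \in A$, the auxiliary sink $t_s$ carrying the supply $d_s$, and prices/costs on the two edges entering each $t_{ij}$ chosen so that the budget equality at $t_{ij}$ simultaneously encodes the capacity $u_{ij}$ and the multiplier $\mu_{ij}$. The explicit relations $f'_{s_j,t_{ij}} = f_{ij}\mu_{ij}$ and $f'_{s_j,t_{jk}} = u_{jk}-f_{jk}$, together with $f'_{s_s,t_s}=d_s$, define the forward map $F \mapsto F'$, and inverting the first relation, $f_{ij} = f'_{s_j,t_{ij}}/\mu_{ij}$, gives its inverse.

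Next I would use Lemma~\ref{FtoFdash} to bound the optima in one direction: taking an optimal $F^{*}$ for $I$, the lemma yields a feasible $F'$ for $I'$ of identical cost, so the minimum attainable cost on $I'$ is at most $\mathrm{OPT}(I)$. Symmetrically, Lemma~\ref{FdashtoF} applied to an optimal $F'^{*}$ for $I'$ returns a feasible $F$ for $I$ of the same cost, giving $\mathrm{OPT}(I) \le \mathrm{OPT}(I')$. The two inequalities sandwich the optima, so $\mathrm{OPT}(I) = \mathrm{OPT}(I')$, which is the claim. The same lemmas also transfer feasibility and finiteness in both directions, so the two optima are simultaneously well defined: if $I$ is feasible and bounded then so is $I'$, and conversely.

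The step I expect to require the most care is not the sandwiching argument, which is essentially one line, but confirming that the forward and reverse maps are genuine inverses on the feasible regions, so that the correspondence is a bijection rather than merely a pair of one-way cost-preserving maps. Concretely, I would verify the round trip: composing the map of Lemma~\ref{FtoFdash} with that of Lemma~\ref{FdashtoF} (and vice versa) must recover the original flow, using that the budget equality at each sink $t_{ij}$ forces $f'_{s_i,t_{ij}} = u_{ij}-f_{ij}$ once $f'_{s_j,t_{ij}}$ is determined, and that the auxiliary edges at $t_s$ and at the sink corresponding to $t$ carry exactly their prescribed values $d_s$ and $d_t$. Once this consistency is established, equality of the optima follows, and since each lemma preserves cost exactly, any approximate solution of either instance transfers across the reduction with the same approximation guarantee.
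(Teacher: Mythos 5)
Your proposal is correct and matches the paper's argument, which simply combines Lemmas \ref{FtoFdash} and \ref{FdashtoF} to sandwich the two optima; the extra care you devote to verifying that the maps are mutual inverses is not actually needed, since the two one-way cost-preserving maps already suffice for equality of the optimal values.
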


Also, note that the solutions themselves can be mapped to each other.
Thus, by solving the Budgeted Transportation problem we not only determine
value of the optimum solution, we can also construct the solution itself.
\vspace*{.2in}

\subsection{Mincost to Maxprofit Budgeted Transportation}
In order to minimize the cost, we could maximize the negative of the cost
$-c_{ij}$. The maximization, however, does not clear the sources and as such 
conservation of flow in the mapped solution to the original problem would not 
be achieved.

We can get over this problem by minimizing instead, $M - c_{ij}$ where $M$
is a very large constant. Since every edge has a very large profit, 
the optimum will clear all the sources. Any error $\delta$ in clearance will
produce a negative term $M\delta$ in the total profit. By choosing
an $M$ large enough, $\delta$ can be made negligibly small; i.e. smaller
than the granularity of values in the solution of a Linear Program. We refer
to \cite{papadimitriou} for the discussion and bound on this granularity. 

The cost of 
the optimum would be $M\sum_i a_i - \sum_{ij} c_{ij}f_{ij}$, where $\sum_{ij}c_{ij}f_{ij}$ is
an optimum solution for the mincost instance. 
We note that an approximate solution to Maxprofit Budgeted Transportation, would not
immediately lead to an approximation algorithm for Generalized flow.

\end{document}